\renewcommand\footnotetextcopyrightpermission[1]{}
\newtheorem{observation}{Observation}[section]
	\providecommand\BibTeX{{%
			\normalfont B\kern-0.5em{\scshape i\kern-0.25em b}\kern-0.8em\TeX}}}
\begin{document}

\title{Crowdsourcing-based Multi-Device Communication Cooperation for Mobile High-Quality Video Enhancement}

\fancyhead{}

\author{Xiaotong Wu}
\email{wuxiaotong@njnu.edu.cn}
\affiliation{%
\institution{Nanjing Normal University}
\city{Nanjing}
\country{China}
}

\author{Lianyong Qi}
\email{lianyongqi@gmail.com}
\affiliation{%
	\institution{Qufu Normal University}
	\city{Rizhao}
	\country{China}
}

\author{Xiaolong Xu}
\email{njuxlxu@gmail.com}
\affiliation{%
	\institution{NUIST}
	\city{Nanjing}
	\country{China}
}

\author{Shui Yu}
\email{shui.yu@uts.edu.au}
\affiliation{%
  \institution{University of Technology Sydney}
  \city{Sydney}
  \country{Australia}}

\author{Wanchun Dou}
\email{douwc@nju.edu.cn}
\authornotemark[1]
\affiliation{%
 \institution{Nanjing University}
 \city{Nanjing}
 \country{China}}

\author{Xuyun Zhang}
\email{xuyun.zhang@mq.edu.au}
\affiliation{%
  \institution{Macquarie University}
  \city{Sydney}
  \country{Australia}
}
\authornote{W. Dou and X. Zhang are the corresponding authors of this research.}



\begin{abstract}
  The widespread use of mobile devices propels the development of new-fashioned video applications like 3D (3-Dimensional) stereo video and mobile cloud game via web or App, exerting more pressure on current mobile access network. To address this challenge, we adopt the crowdsourcing paradigm to offer some incentive for guiding the movement of recruited crowdsourcing users and facilitate the optimization of the movement control decision. In this paper, based on a practical 4G (4th-Generation) network throughput measurement study, we formulate the movement control decision as a cost-constrained user recruitment optimization problem. Considering the intractable complexity of this problem, we focus first on a single crowdsourcing user case and propose a pseudo-polynomial time complexity optimal solution. Then, we apply this solution to solve the more general problem of multiple users and propose a graph-partition-based algorithm. Extensive experiments show that our solutions can improve the efficiency of real-time D2D communication for mobile videos.
\end{abstract}


\begin{CCSXML}
	<ccs2012>
	<concept>
	<concept_id>10003033.10003106.10003113</concept_id>
	<concept_desc>Networks~Mobile networks</concept_desc>
	<concept_significance>500</concept_significance>
	</concept>
	<concept>
	<concept_id>10002951.10003227.10003251.10003255</concept_id>
	<concept_desc>Information systems~Multimedia streaming</concept_desc>
	<concept_significance>500</concept_significance>
	</concept>
	</ccs2012>
\end{CCSXML}

\ccsdesc[500]{Networks~Mobile networks}
\ccsdesc[500]{Information systems~Multimedia streaming}

\keywords{Mobile Videos, D2D Communication, Movement Control, Utility Optimization}

\maketitle

\section{Introduction}
Video streaming has obtained a significant increase of popularity and become the dominant application over the Internet in past decades. According to recent reports \cite{sandvine2016global}, online videos via web or App currently occupy more than half of the Internet traffic. In this process, the widespread use of powerful mobile devices such as smartphones has a far-reaching effect on the prosperity of video streaming. Portable mobile devices enable users to instantly share user-generated videos anywhere over social media applications and enjoy high-quality on-demand videos anytime. The prosperity of video streaming propels the pursuit for better visual experience, boosting the advent of UHD (Ultra-High-Definition) video and new multimedia applications like 3D stereo video \cite{aflaki2015subjective} and cloud game \cite{illahi2020cloud}. These new-fashioned video applications trigger higher bandwidth demand and exert more pressure on current Internet network and mobile access network. The mobile communication sector has been characterized by an exponentially increasing traffic demand for high quality mobile multimedia services \cite{de2016mobile}.


To cater for the development of mobile video streaming, new network technologies such as SDN (Software Defined Networking) \cite{kreutz2014software} and NFV (Network Function Virtualization) \cite{mijumbi2015network} have been studied to optimize Internet traffic and demonstrated to act as service cornerstone on future Internet. Geo-distributed cloud computing platforms \cite{zheng2016online} are introduced to address dynamic video demand scale by leveraging its elastic resource provisioning and seemingly unlimited computation power. However, these technologies focus on the optimization inside Internet network and lack attention to mobile access network. Actually, supporting efficient high-quality real-time video delivery to mobile users in mobile networks is challenging and user perceived quality for video services delivered over these networks is in general low. From Shannon's capacity formula, the maximum transmission rate in wireless communication would be restricted by the limited signal bandwidth and transmit power \cite{zhang2017joint}. The competition among mobile users associated with the same BS (Base Station) limits each individual maximum throughput, lagging behind the users' booming high-quality video traffic demand.

D2D (Device-to-Device) communication paradigm is treated as a promising technology to compensate this gap \cite{li2014message}. D2D communication in cellular networks allows direct communication between two mobile users without traversing a BS or a core network. Most existing works \cite{wu2018cache, wu2019beef, chen2017cache, yan2018network} showed that D2D communication improves spectral and energy efficiency, delay, or even fairness. However, existing D2D communication researches usually build on users' random mobility or position snapshot in cellular networks. This opportunistic mobility characteristic implies that the D2D communication is low-efficient, especially for video transmission. The random mobility of mobile users leads to dynamic variation of communication duration and quality, which can only support video transmission on a best-effort basis. Actually, sustainable real-time D2D communication is preferable for high-quality online videos.

\begin{figure}[t]
	\centering
	\includegraphics[width=8cm,height=4cm]{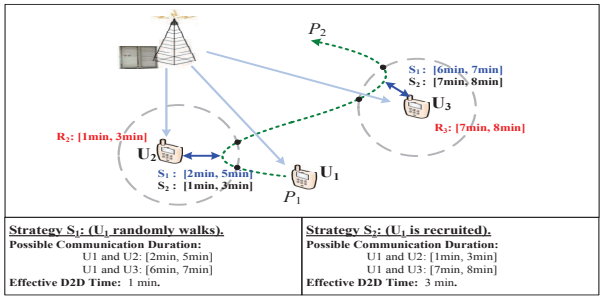}
	\caption{Motivation example: Comparison of video enhancement service quality under two different strategies. 
	}
	\label{fig:moti}
\end{figure}

In this paper, we focus on efficiency improvement of D2D communication for online videos. By investigating a simple example shown in Fig.~\ref{fig:moti}, we observe an important phenomenon that the D2D communication efficiency can be greatly improved by adopting the crowdsourcing paradigm to facilitate the control on users' movement. In detail, there are three smartphone users $U_1, U_2$, and $U_3$. $U_2$ requests the video enhancement service from 1min to 3min ($x$min denotes the time stamp with minute as the unit), while $U_3$ requests the same service from 7min to 8min. Meanwhile, $U_1$ moves from $P_1$ (i.e., a position) to $P_2$ and can support video enhancement service by D2D communication. Assume we examine two video enhancement strategies $S_1$ and $S_2$. In Strategy $S_1$, $U_1$ randomly walks from $P_1$ to $P_2$ and no control is imposed on the movement. The total effective service time of both $U_2$ and $U_3$ is only $1$ minute. In Strategy $S_2$, $U_1$ is recruited and the corresponding total effective service time is 3 minutes. Hence, it can be observed that applying a control on the movement of the recruited mobile users (i.e., $S_2$) can increase the real-time D2D communication efficiency, hence improving the live video quality.

The abovementioned example motivates us to investigate the crucial problem of optimizing the movement of recruited mobile users to improve video quality. Meanwhile, the pioneering work \cite{lin2018sybil, karaliopoulos2019optimal} has demonstrated that adopting incentive mechanisms is an effective measure to guide users to a given destination so as to enhance the video service quality. However, little work focuses on optimizing recruited crowdsourcing users' movement for video enhancement, mainly due to the following challenges. First, it is difficult to implement a quantization analysis of the influence of current real-time D2D communication on high-quality videos on the mobile devices. Second, it is possible that different requests can conflict with each other with respect to the service time, while strong timeliness is essential for a request. Therefore, the far-sighted cooperation decision needs to be made for multiple recruited crowdsourcing users to optimize the efficiency of the system. Third, the number of both requesters and recruited mobile users can be large, which leads to exponential time complexity if one uses a straightforward exhaustive search method. 

In this paper, we first study on practical 4G network throughput measurement. Then, based on the analysis results, we formulate a cost-efficient user recruitment problem (CURP) and design efficient algorithms for both single and multiple recruited crowdsourcing users cases. In summary, the main technical contributions of this paper are summarized as follows:
\begin{itemize}
	\item We analyse the possibility to improve the quality of mobile videos by exploring current mobile network throughput. We dig out important observations: 
	One recruited 4G mobile user can enhance fluent playback of 4K\footnote{4K refers to 3840×2160 pixels, which is defined by the International Telecommunication Union (ITU).} video and a better no-stutter playback of FHD online video.
	\item We formulate the CURP and rigorously prove its intractable computation complexity, i.e., being NP-hard. We design a pseudo-polynomial time optimal algorithm for single crowdsourcing user case and a Graph-Partition-based Algorithm (GPA) for multiple crowdsourcing users.
	\item We perform extensive experiment evaluations on real data sets and the results demonstrate that by leveraging the crowdsourcing technique and facilitating a control on the movement of recruited mobile users, our solution can effectively  guarantee high online video quality. 
\end{itemize}

The rest of this work is organized as follows. Section \ref{sec:related_work} reviews the related work. Section \ref{sec:network_analysis} studies a 4G network throughput analysis. Section \ref{sec:framework} formulates the cost-constrained user recruitment problem. Section \ref{sec:single_user} designs the optimal algorithm for the single user case while Section \ref{sec:general_users} addresses the general problem. Section \ref{sec:perform} evaluates our solutions and Section \ref{sec:conclude} concludes this paper.


\section{Related Work}
\label{sec:related_work}



D2D communication is a promising technology to relieve the pressure forced by the global video streaming on the access network. In academia, D2D communication was first proposed in \cite{lin2000multihop} to enable multi-hop relays in cellular networks. This architecture is demonstrated to be able to improve spectral efficiency of cellular networks dramatically. Then, D2D communication was introduced into video dissemination \cite{le2015microcast, wu2018cache, yan2018network}. 
Based on a network utility maximization framework, Le et al. \cite{le2015microcast} designed a cooperative system called MicroCast, in which each mobile device has simultaneously two network interfaces.
Wu et al. \cite{wu2018cache} proposed a user-centric video transmission mechanism based on D2D communications that allows mobile users to cache and share videos in a cooperative manner.
Yan et al. \cite{yan2018network} proposed a network coding aided collaborative real-time scalable video transmission in D2D communications.

However, existing D2D communications for video dissemination usually assume that some users cache and share these video clips when they meet. 
This best-effort style makes the efficiency of D2D communication low, especially for live video streaming with hard timeliness. 
Fortunately, crowdsourcing paradigm is a potential technique to adopt incentive mechanisms to guide the movement of recruited users to improve the communication efficiency of the whole system \cite{lin2018sybil, karaliopoulos2019optimal}. Lin et al. \cite{lin2018sybil} desinged Sybil-proof online incentive mechanisms to guide users' operations and perform the corresponding tasks. Therefore, we introduce the paradigm to improve the D2D communicaiton efficiency of mobile videos. 


\section{4G Network Throughput Analysis}
\label{sec:network_analysis}

\begin{table}[b]
	\centering
	\begin{threeparttable}[b]
		\caption{Fluent Playback Probability of FHD and 4K Video ("1x" means a single 4G network, while "2x" means the combination of 4G networks via Wi-Fi Direct).}
		\renewcommand\arraystretch{1.15}
		\begin{tabular}{p{0.6cm}<{\centering}p{0.6cm}<{\centering}p{0.75cm}<{\centering}p{0.7cm}<{\centering}p{0.7cm}<{\centering}p{0.7cm}<{\centering}p{0.75cm}<{\centering}p{0.7cm}<{\centering}}
			\toprule
			Video               & Conf.  &  Bicycle & Bus    & Car    & Foot   & Train   & Tram            \\
			\midrule
			\multirow{2}{*}{FHD}&  1x    &  0.9470  & 0.9733 & 0.9746 & 0.9256 & \textbf{0.8853}  & 0.9259            \\
			&  2x    &  0.9983  & 0.9995 & 0.9995 & 0.9974 & \textbf{0.9891}  & 0.9964\\
			\midrule
			\multirow{2}{*}{4K} &  1x    &  0.6409  & 0.7220 & 0.7862 & 0.6182 & \textbf{0.4941}  & 0.6345            \\
			&  2x    &  0.9581  & 0.9824 & 0.9891 & 0.9446 & \textbf{0.8830} & 0.9297\\
			\bottomrule
		\end{tabular}
		\label{table:playback_prob}
	\end{threeparttable}
\end{table}

\subsection{Online Video Watching}


We introduce a real measurement evaluation conducted recently by J. van der Hooft et al. \cite{van2016http} in the city of Ghent, Belgium. 
The dataset contains throughput logs in six different scenes, i.e., bicycle, bus, car, foot, train, and tram.
We focus on two different types of videos, i.e., FHD and 4K video. 
From some popular test movies such as Netflix’s El Fuente \cite{netflix}, a nominal bitrate of 5.2Mbps is enough for mobile user to enjoy a FHD 1080p video by using H.265/HEVC encoding technology while 21.4Mbps are enough to enjoy a 4K 2160p video. Based on these nominal values and the above real 4G throughput, the fluent playback probabilities of these two types of videos are calculated and the corresponding results are listed in Table \ref{table:playback_prob} (Conf. = 1x). An important finding is that one individual’s 4G network can effectively support the fluent playback of FHD video. However, for 4K videos, the situation turns to be very bad. In particular, the fluent playback probability is about two thirds in general and even less than one half in the train scene.



\begin{observation}
	\label{obs:1}
	One user's 4G network effectively supports the fluent playback of online FHD video, but is not good at 4K video.
\end{observation}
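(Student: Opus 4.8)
The plan is to treat this as an empirical claim grounded in the measured 4G throughput traces, and to argue it by quantifying the \emph{fluent playback probability} for each video type and each mobility scene, then comparing the resulting numbers against a reasonable ``effective support'' criterion. First I would make precise what fluent playback means: a session plays without stalling essentially when the delivered throughput sustains the video's nominal bitrate, so I would define the fluent playback probability of a video of nominal bitrate $b$ in a scene $s$ as the empirical fraction of that scene's throughput samples (or of windowed averages, to match the buffer dynamics) that meet or exceed $b$. For FHD I would set $b = 5.2$ Mbps and for 4K $b = 21.4$ Mbps, the values justified earlier in Section~\ref{sec:network_analysis} from the H.265/HEVC test movies.

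Next I would evaluate this probability over the six scenes (bicycle, bus, car, foot, train, tram) using the dataset of van der Hooft et al.~\cite{van2016http}, reproducing the single-network (1x) rows of Table~\ref{table:playback_prob}. The decisive comparison is then read off directly: the FHD probabilities all lie in a high band (roughly $0.89$ to $0.97$), whereas the 4K probabilities drop markedly (roughly $0.49$ to $0.79$). To convert this contrast into the stated observation I would fix a threshold $\tau$ separating ``effective support'' from ``poor support'' --- a natural choice lies in the upper $0.8$ range --- and verify that every FHD entry exceeds $\tau$ while every 4K entry falls below it, with the gap widest in the worst-case train scene, where FHD still attains $0.8853$ but 4K collapses to $0.4941$.

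The main obstacle is not computation but justification of the criterion: because ``effectively supports'' is not an a priori numerical statement, the argument's force rests on (i) the fidelity of the fluent-playback model to real buffering behaviour and (ii) the defensibility of the chosen threshold. I would address the first point by noting that the simple ``throughput $\geq$ bitrate'' condition is conservative relative to a full buffer-occupancy analysis, so the reported FHD figures are if anything lower bounds on true fluency. For the second point, rather than defend a single magic number, I would emphasise the \emph{separation}: the FHD and 4K probability ranges do not overlap in any scene, so the qualitative conclusion is robust to the exact value of $\tau$ anywhere inside the gap. This non-overlap, reinforced by the uniformly poor 4K performance in the train scene, is what makes the observation defensible.
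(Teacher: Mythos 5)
Your proposal matches the paper's own justification essentially step for step: the paper also computes empirical fluent playback probabilities from the van der Hooft et al.\ 4G traces using the same nominal bitrates (5.2\,Mbps for FHD, 21.4\,Mbps for 4K), tabulates them per scene in the 1x rows of Table~\ref{table:playback_prob}, and reads off the observation from the high FHD band versus the roughly two-thirds (and sub-one-half in the train scene) 4K band. Your added emphasis on threshold robustness via the non-overlapping probability ranges is a sound refinement of the same argument, not a different route.
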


\begin{figure}[t]
	\centering
	\includegraphics[width=7cm,height=3.5cm]{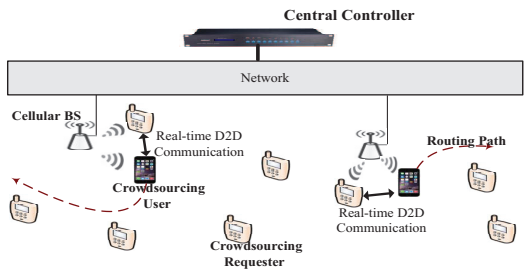}
	\caption{A crowdsourcing scenario with multi-device cooperation.}
	\label{fig:problem_def}
\end{figure}

\subsection{Real-Time D2D Communication}
We consider the influence of real-time D2D communication between mobile users on online videos. OFDM (Orthogonal Frequency Division Multiplexing) technology is usually adopted to proceed data transmission and communication and there would be little interference among mobile users even associated with the same cellular base \cite{cimini1985analysis}. 
\textsc{Proposition} \ref{pro:1} shows one mobile user's fluent playback probability is improved exponentially with the increase of mobile users building real-time D2D communication.

\begin{proposition}
	\label{pro:1}
	Suppose 4G mobile users' network bandwidth conditions are independent identically distributed with CDF of $F$, then the CDF (denote as $G$) of the total available network bandwidths of these $z$ users satisfies the inequality: $ G(x)\le F^{z}(x) $.
\end{proposition}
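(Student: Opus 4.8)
The plan is to recognize the right-hand side $F^{z}(x) = [F(x)]^{z}$ as the CDF of a simpler, related random variable and then reduce the claim to a pointwise comparison of events. First I would fix notation: let $X_1, \ldots, X_z$ denote the individual available bandwidths of the $z$ users, which by hypothesis are i.i.d.\ with common CDF $F$, and let $S = \sum_{i=1}^{z} X_i$ be the aggregate bandwidth. By definition of $G$ as the CDF of the total bandwidth, $G(x) = \Pr[S \le x]$, so the target inequality is $\Pr[S \le x] \le F^{z}(x)$.

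The key observation is that, by independence, $F^{z}(x) = \prod_{i=1}^{z} \Pr[X_i \le x] = \Pr[X_1 \le x, \ldots, X_z \le x] = \Pr[\max_i X_i \le x]$. Hence $F^{z}$ is precisely the CDF of the maximum of the $z$ individual bandwidths, and the proposition reduces to comparing $\Pr[S \le x]$ against $\Pr[\max_i X_i \le x]$.

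The heart of the argument is a deterministic event-inclusion step. Since bandwidths are physical quantities, each $X_i \ge 0$; therefore, whenever $S = \sum_i X_i \le x$ holds, every individual term satisfies $X_j \le S \le x$ because the remaining summands are non-negative, which forces $\max_i X_i \le x$. This yields the inclusion $\{S \le x\} \subseteq \{\max_i X_i \le x\}$, and monotonicity of the probability measure then gives $G(x) = \Pr[S \le x] \le \Pr[\max_i X_i \le x] = F^{z}(x)$, as required.

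The step I expect to require the most care is making the non-negativity assumption explicit and justifying it from the modeling context, since the event inclusion—and therefore the entire inequality—breaks down without it (a sum can stay below $x$ even when one summand exceeds $x$, provided another summand is sufficiently negative). Once non-negativity is secured, the remainder is a one-line consequence of the monotonicity of probability, so I do not anticipate any further technical obstruction.
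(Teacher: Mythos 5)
Your proof is correct, and it takes a genuinely different route from the paper's. The paper argues by direct density computation: for $z=2$ it writes $G(x)=\int_{0}^{x}\int_{0}^{x-t}f(y)f(t)\,dy\,dt$, bounds the inner integral by $\int_{0}^{x}f(y)\,dy$ (valid since $t\ge 0$ shrinks the inner range), obtains $G(x)\le F^{2}(x)$, and then iterates over the number of users. You instead identify $F^{z}(x)$ as the CDF of $\max_i X_i$ and observe the event inclusion $\{\sum_i X_i\le x\}\subseteq\{\max_i X_i\le x\}$, which holds precisely because each $X_i\ge 0$; monotonicity of the probability measure finishes in one step. Your version buys three things: it needs no density $f$ to exist (the paper's proof silently assumes one), it handles all $z$ simultaneously with no induction, and it makes explicit the non-negativity hypothesis that the paper uses only implicitly through its integration limits starting at $0$ --- a hypothesis that is genuinely necessary, since for distributions supported on negative values the inequality can reverse (e.g.\ for two standard normals, $G(0)=1/2 > 1/4 = F^{2}(0)$). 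The paper's computation, for its part, makes the two-user case concrete and self-contained for readers who prefer an explicit integral, but the underlying fact is the same: the paper's bound on the inner integral is exactly your event inclusion expressed slice-by-slice on $\{X_2=t\}$.
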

\begin{proof}
	Without loss of generality, we only need to consider the case of two 4G users. Then,
	$$G(x)=\int_{0}^{x}\int_{0}^{x-t}f(y)f(t)dydt\le\bigg(\int_{0}^{x}f(t)dt\bigg)^2=F^{2}(x).$$
	Then, an iteration process on users' number can complete the proof.
\end{proof}

From \textsc{Proposition} \ref{pro:1}, given any demand $x_{0}$, $1-G(x_{0})\ge 1-F^{z}(x_{0})$. Therefore, for any one mobile user, if all other $z-1$ mobile users build real-time D2D communication with him, his fluent playback probability can be improved dramatically. However, overlarge $z$ is impractical in reality as one smartphone's interface is usually limited (one cellular interface and one WiFi interface). Therefore, we focus on a practical case with $z=2$ where two smartphones are connected through Wi-Fi Direct\footnote{Wi-Fi Direct a recent communication technology superior to Bluetooth. The previous measurement shows Wi-Fi Direct performs 30 times better than Bluetooth in terms of throughput and can allow a high transmission throughput (18$\sim$30 Mbps), depending on communication distance (1$\sim$10 m) \cite{zhanikeev2013virtual}.}. We quantitatively evaluate one mobile user's fluent playback probability in such a cooperative occasion. 
The influence of this cooperation style to online video support is listed in Table \ref{table:playback_prob} (Conf. = 2x). Here, we find that the cooperation of two 4G users can allow one of them to enjoy fluent playback of 4K with a probability of above 90 percent. Moreover, for FHD video, the fluent playback probability improves greatly, especially in the train scene.

\begin{observation}
	\label{obs:2}
	One 4G mobile user can enjoy fluent playback of 4K video with one more 4G mobile user building real-time D2D communication through Wi-Fi Direct. Moreover, he can enjoy a better no-stutter playback of FHD online video by doing this.
\end{observation}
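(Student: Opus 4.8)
The plan is to justify Observation~\ref{obs:2} by coupling the distributional inequality of Proposition~\ref{pro:1} (specialized to $z=2$, the practical Wi-Fi Direct pairing) with the measured throughput traces, thereby promoting the qualitative ``improvement'' guarantee into the concrete threshold claims about 4K and FHD playback.

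First I would fix the per-video bandwidth demands established earlier, namely $x_0 = 21.4$~Mbps for 4K and $x_0 = 5.2$~Mbps for FHD, and recall that a single user's fluent playback probability is $1 - F(x_0)$, which Table~\ref{table:playback_prob} (Conf.~=~1x) records as roughly two-thirds (and only $0.4941$ in the train scene) for 4K and $0.88$--$0.97$ for FHD. Next I would invoke Proposition~\ref{pro:1} with $z = 2$: since the D2D-combined bandwidth has CDF $G$ with $G(x_0) \le F^{2}(x_0)$, the paired fluent playback probability obeys $1 - G(x_0) \ge 1 - F^{2}(x_0) \ge 1 - F(x_0)$, where the final step uses $F^{2}(x_0) \le F(x_0)$ for any CDF. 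This chain establishes the monotone, indeed strict, improvement underlying both sentences of the observation, and shows the gain is largest precisely where the single-user probability $1-F(x_0)$ is smallest, i.e.\ for 4K.

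Finally I would substitute the empirical distribution (equivalently, evaluate the measured traces at $x_0$) to read off the Conf.~=~2x row of Table~\ref{table:playback_prob}, confirming that the 4K probability clears the $90\%$ bar in every scene except the worst-case train trace and that the FHD probability saturates near $0.99$, which upgrades the generic bound into the specific no-stutter statements being claimed.

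The main obstacle is that Proposition~\ref{pro:1} alone delivers only the lower bound $1-F^{2}(x_0)$ together with the qualitative fact that it dominates $1-F(x_0)$; it cannot by itself certify the quantitative thresholds (``above $90$ percent'', ``no-stutter'') asserted in the observation, so crossing those thresholds is genuinely data-dependent and must be discharged by the measured $F$ rather than by the inequality. A secondary caveat is the i.i.d.\ and independence hypothesis behind Proposition~\ref{pro:1}: co-located devices may suffer correlated fading or residual interference, so I would lean on the OFDM argument cited above to contend that independence is a reasonable idealization in the Wi-Fi Direct regime.
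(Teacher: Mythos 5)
Your proposal is correct and mirrors the paper's own justification: the paper likewise treats Observation~\ref{obs:2} as an empirical claim discharged by combining Proposition~\ref{pro:1} (the qualitative bound $1-G(x_0)\ge 1-F^{z}(x_0)$, specialized to the practical $z=2$ Wi-Fi Direct case) with the measured-trace evaluation reported in the Conf.~=~2x rows of Table~\ref{table:playback_prob}. Your explicit caveat that the inequality alone cannot certify the quantitative thresholds, and that these must come from the data (including the sub-$90\%$ train scene for 4K), is an accurate and slightly more careful rendering of the same argument.
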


From \textsc{Observation} \ref{obs:1} and \ref{obs:2}, when some smartphone user (2/3/4G) wants to enjoy an excellent view experience of online video, one more 4G user can be employed to achieve this goal by building real-time D2D communication.


\section{Online Video Enhancement via Multi-Device Cooperation}
\label{sec:framework}


\subsection{System Model}

Fig. \ref{fig:problem_def} shows a crowdsourcing scenario where mobile users seek to enjoy enhanced online video view experience by recruiting idle 4G mobile users. We consider a spatial area $\Lambda$ with a set of smartphone users where user channel interference is negligible. The area $\Lambda$ is divided into $K$ discrete non-overlap smaller regions, i.e., $\Lambda=\{1, 2,\dots, K\}$. We assume that crowdsourcing user recruitment is activated at periodic time intervals of duration $T$. Within $T$, suppose $I$ mobile users are \emph{crowdsourcing requesters} while $J$ mobile users are idle and serve as \emph{crowdsourcing users}. Here, one crowdsourcing requester can enjoy an enhanced view experience by connecting to another crowdsourcing user through Wi-Fi Direct. As one requester is relatively still when he/she is watching an online video, we assume crowdsourcing requester $i$'s location is fixed at $d_{i}$. Further, the service range of crowdsourcing user $j$ in region $k$ is denoted as $S_{j,k}$. For simplicity, we assume $S_{j,k}=\{k\}$. That is, the scale of each region is comparable to one crowdsourcing user's service range.

In this model, by leveraging some incentives \cite{lin2018sybil, karaliopoulos2019optimal}, one crowdsourcing user can move to some pre-determined regions. This controlled mobility characteristic results in the expansion of one crowdsourcing user's service range and thus improves the system's performance. The moving cost between different regions is denoted as a matrix \textbf{P} where the item $p_{k_1,k_2}$ stands for moving cost from region $k_1$ to region $k_2$, where $p_{k_1,k_2}=0$ if $k_1=k_2$. It satisfies the following triangle inequality:
\begin{flalign}
p_{k_1,k_2}\le p_{k_1,k_3}+p_{k_3,k_2},\ \forall k_1,k_2,k_3\in\Lambda.
\end{flalign}

Moreover, in this paper, we assume when a crowdsourcing user is in transfer, he would not serve any requester and all crowdsourcing users obey a \emph{homogenous-constant-speed} mobility model. Here, \emph{homogeneity} means different crowdsourcing users need the same time to move from one specific region to another specific region while \emph{constant-speed} means transfer time between different regions also satisfies triangle inequality. Another matrix \textbf{Q} is used to denote transfer time between different regions where the item $q_{k_1,k_2}$ represents transfer time from region $k_1$ to region $k_2$, where $q_{k_1,k_2}=0$ if $k_1=k_2$. The following triangle inequality holds:
\begin{flalign}
q_{k_1,k_2}\le q_{k_1,k_3}+q_{k_3,k_2},\ \forall k_1,k_2,k_3\in\Lambda.
\end{flalign}

Route and association optimization is conducted by a central controller. The controller can collect location information of all crowdsourcing requesters and users, and make a decision on system performance optimization.

\subsection{Problem Formulation}
We formulate the optimization problem that \textbf{the trusted central controller} solves. Suppose the enhanced time interval for requester $i$ starts from $s_i$ and end at $e_{i}$. Let $a_{tj}\in\{0,1\}$ denote whether crowdsourcing user $j$ is in transfer at time $t$ and $y_{tj}^{i}\in\{0, 1\}$ denote whether the request of $i$ is done on crowdsourcing user $j$ at time $t$. Here, we assume that at $t$, one crowdsourcing user serves only one requester through Wi-Fi Direct. $y_{t}^{i}=\sum_{j} y_{tj}^{i} \in \{0, 1\}$ indicates whether the request of $i$ is done at $t$. 
We also give the definitions of requester utility and system utility, respectively.

\begin{definition}[Requester Utility] The requester utility $U_{i}$ of crowdsourcing requester $i$ is defined as the length of service interval i.e., $U_{i}=|\{ t| y_{t}^{i}=1\}|$. For example, if $\{ t| y_{t}^{i}=1\}=[1,2]\cup[3,5]$, $U_{i}=3$.
\end{definition}


\begin{definition}[System Utility] The utility of the whole system is defined as the sum of individual requester utility i.e., $U=\sum_{i} U_{i}$.
\end{definition}

We use $l^{j}$ and $l_{t}^{j}$ to stand for routing path (region sequence) of crowdsourcing user $j$ and its location (region) at $t$ respectively. Specially, $l_{0}^{j}$ means the initial location. The symbol $\mathcal{C}(j\xrightarrow{t} l^{j})$ counts the relevant moving cost when $j$ follows the path $l^{j}$ until time $t$. That is,
\begin{equation}
\mathcal{C}(j\xrightarrow{t} l^{j})=\left\{
\begin{aligned}
&0,\ &&t=0\\
&\mathcal{C}\big(j\xrightarrow{b(t)} l^{j}\big)+p_{l_{b(t)}^{j},l_{t}^{j}},\ &&t>0\\
\end{aligned}\right.
\end{equation}
where $b(t)=t^{-}$ means the previous moment before $t$.

Here, we formulate crowdsourcing user recruitment problem aiming at maximizing the system utility $U$. Actually, there is a relationship between system utility and the maximum available moving budget. A higher latter can support a higher system utility. Thus, we study the cost-constrained situation where central controller solves the following problem:
\begin{definition}[Cost-constrained User Recruitment Problem, CURP] The problem aims at maximizing system utility within a limited budget by making routing path and association decisions, i.e.,
	\begin{flalign}
	\max_{\textbf{\emph{l}},\ \textbf{\emph{y}}}\ \ \ \ &\ \ \ \ \ \ \ \ \ \ U\\
	\textrm{\emph{s.\ t.}}\ \ \ \ \  &\sum_{j}\ \mathcal{C}\big(j\xrightarrow{T} l^{j}\big)\le C\\
	&\sum_{i}y_{tj}^{i}\le 1,\ \ \forall t,j\\
	&\ y_{tj}^{i}\le 1-a_{tj},\ \forall t,j,i\\
	&\ y_{tj}^{i}=0,\ \ \forall d_{i}\notin S_{j,l_{t}^{j}}\\
	&\ y_{tj}^{i}, a_{tj}\in\{0, 1\},\ \forall t,j,i
	\end{flalign}
	where $\textbf{\emph{l}}$$=$$\{l^{j}|\ \forall j\}$, $\textbf{\emph{y}}$$=$$\{y_{tj}^{i}|\ \forall t,j,i\}$ and $C$ is the maximum available moving budget. Eq. (5) ensures a limited cost budget while Eq. (6) means one crowdsourcing user serves at most one requester every moment. Eq. (7) guarantees that crowdsourcing user would not provide service in transfer. Eq. (8) demands one crowdsourcing user should build real-time D2D communication with another requester when they are close enough. Moreover, $C$ is a constant knob controlling the optimal objective value.
\end{definition}


\subsection{Computation Complexity}

\begin{theorem}
	The CURP is NP-hard, i.e., there exists no polynomial
	time optimal algorithm to this problem unless P = NP.
\end{theorem}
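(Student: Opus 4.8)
The plan is to prove NP-hardness by a polynomial-time reduction from the classical \emph{0/1 Knapsack} problem (which is NP-hard) to the decision version of CURP; in fact I would reduce directly to the single crowdsourcing user case ($J=1$), so that the hardness already holds for this special case. Recall that a Knapsack instance gives items $1,\dots,n$ with values $v_i$, weights $w_i$ and capacity $W$, and asks whether some subset has total value at least a target $V$ while its total weight is at most $W$. The decision version of CURP asks whether there is a routing/association pair $(\textbf{\emph{l}},\textbf{\emph{y}})$ achieving system utility $U \ge V'$ within budget $C$. I would map each item $i$ to one \emph{crowdsourcing requester} placed in its own region $r_i$, whose enhancement window $[s_i,e_i]$ has length $e_i - s_i = v_i$, so that fully serving it contributes exactly $v_i$ to the utility.

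I would place these windows in disjoint, widely separated time \emph{eras} (using a spacing $\Delta$ that is huge relative to all transfer times), so that within one era the single user can serve at most the one requester of that era, while between eras there is always ample slack in the transfer-time matrix $\mathbf{Q}$ to reach any region. The weights are then encoded in the moving-cost matrix $\mathbf{P}$ via a \emph{star} metric: a home region $h$ at the centre, each $r_i$ at distance $w_i/2$ from $h$, and budget $C = W$. A star (tree) metric automatically satisfies the triangle inequality of Eq. (1), and an analogous small-distance star serves as $\mathbf{Q}$ and satisfies Eq. (2). The key device is to force the user back to $h$ at the end of every era: I would insert, just after each window $[s_i,e_i]$, a mandatory \emph{home requester} $H_i$ located at $h$ with a short window, and make the target $V'$ large enough that every $H_i$ must be served. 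Then, regardless of whether item $i$ is chosen, the user is at $h$ at both ends of era $i$, so serving $r_i$ costs exactly the round trip $2\cdot(w_i/2)=w_i$ while skipping it costs $0$.

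With this construction the total moving cost of any feasible schedule equals $\sum_{i\in S} w_i$, where $S$ is the set of served items, and the item-utility equals $\sum_{i\in S} v_i$. Hence a feasible CURP solution with cost $\le W$ and utility $U \ge V'$, where $V'$ equals the (forced) total home-requester utility plus $V$, exists if and only if the Knapsack instance admits a subset of value $\ge V$ and weight $\le W$. Since all parameters are polynomial in the input size, this is a valid polynomial-time reduction, and it certifies that even the single-user special case is NP-hard.

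The step I expect to be the main obstacle is making the weight encoding \emph{exact} despite the two triangle-inequality constraints: in a metric the cost of visiting a set of regions is a TSP-style path cost, which is generally not the additive quantity $\sum_{i\in S} w_i$ that Knapsack requires, and naive gadgets (e.g.\ a free skip-edge placed beside a costly take-edge into a common hub) are immediately ruled out by Eq. (1). The forced-return-to-$h$ device is precisely what restores additivity, so I would spend most of the care verifying that (i) both star metrics respect Eq. (1) and Eq. (2); (ii) the era spacing makes the windows mutually non-conflicting while leaving the transfer times non-binding; and (iii) the user can neither gain utility from an unserved era's requester nor shortcut between two item regions so as to pay less than $w_i$. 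Finally, I would remark that reducing from Knapsack is fully consistent with the pseudo-polynomial-time optimal algorithm later claimed for the single-user case, since Knapsack is only \emph{weakly} NP-hard.
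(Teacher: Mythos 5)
Your reduction is correct in substance and shares the paper's overall strategy --- a polynomial reduction from 0/1 Knapsack to the single-user case ($J=1$), with requesters in distinct regions, time windows separated widely enough (the paper uses $e_i+q_{max}\le s_{i+1}$, your ``eras'') that transfer times never bind, window lengths encoding values ($e_i-s_i=v_i$), and budget $C=W$ --- but it diverges on the one step you correctly identified as the crux, the weight encoding. The paper's trick is much lighter than yours: it observes that Eq.~(1) requires only the triangle inequality, \emph{not symmetry}, of the cost matrix $\textbf{P}$, and simply sets $p_{k,d_i}=w_i$ for every source region $k$, i.e., a destination-only cost. Such costs satisfy Eq.~(1) automatically ($p_{k_1,k_2}+p_{k_2,k_3}=w_{k_2}+w_{k_3}\ge w_{k_3}=p_{k_1,k_3}$), and they make the total moving cost additive over the set of visited requesters with no gadget at all: every entry into region $d_i$ costs exactly $w_i$ regardless of provenance, so routing degenerates into subset selection. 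Your construction insists on a symmetric (star) metric and therefore has to fight the TSP-path effect you describe --- without the forced return, a tour through leaves $r_{i_1},\dots,r_{i_m}$ of a star with radii $w_i/2$ costs $\sum_j w_{i_j}-w_{i_m}/2$, so the backward direction of the equivalence would leak by half the last weight --- and your home-requester gadget repairs exactly this. That buys robustness (hardness holds even under symmetric, indeed tree-metric, costs, a strictly stronger statement than the paper's), at the price of extra verification obligations; the one place your sketch is thinner than it should be is the phrase ``make the target $V'$ large enough that every $H_i$ must be served'': partial service of windows is allowed in the model, so you should fix each home window length $L>\sum_i v_i$ and argue (a) entirely missing the home window following the last served item forfeits utility $L$, which no combination of item utilities can repay, hence any solution meeting $V'=nL+V$ visits $h$ at some point of that window, and (b) visiting $h$ at all --- even for partial service --- already closes the round trip in the star metric, so the cost accounting $\sum_{i\in S}w_i\le W$ goes through. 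With that patch your reduction is complete, and your closing remark is apt for both proofs: reducing from Knapsack yields only weak NP-hardness, consistent with the pseudo-polynomial optimal algorithm of Theorem~5.2 for the $J=1$ case.
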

\begin{proof}
	We prove that the famous knapsack Problem can be reduced in polynomial time to Cost-Constrained User Recruitment Problem.
	Given one instance of knapsack Problem: There are a set of $M$ items numbered from 1 up to $M$, each with a weight $w_{m}$ and a value $v_{m}$, along with a maximum weight capacity $W$,
	\begin{equation}
	\max \sum_{m}v_{m}x_{m},\ \ s.t.,\ \sum_{m}w_{m}x_{m}\le W; x_{m}\in\{0, 1\}.
	\end{equation}
	For Cost-Constrained User Recruitment Problem, we consider the simplest case with $M$ crowdsourcing requesters and only $1$ crowdsourcing user, i.e., $I=M$ and $J=1$. These crowdsourcing requesters are located in different regions, i.e., $d_{i_1}$$\neq$$d_{i_2}, \forall i_1\neq i_2$. And the enhanced interval satisfies $e_i+q_{max}\le s_{i+1}, \forall i=1,\dots, M-1$ where $q_{max}$ is the maximum value in \textbf{Q}. We make the following settings:
	\begin{itemize}
		\item \textbf{P}: $p_{k,d_{i}}=w_{i}, \forall k=1,2,\dots, K; i=1,2,\dots, M$;
		\item $e_i-s_i = v_{i}, \forall i=1,\dots, M$;
		\item $K=I+1$ and $l_{0}^{1}\neq d_{i}, \forall i=1,\dots, M$;
		\item $C=W$.
	\end{itemize}
	Routing path optimization is simplified to be a node selection decision. Note that this setting would not violate moving cost triangle inequality. To prove this, we randomly pick out three points $k_{1},k_{2},k_{3}$, then $p_{k_1,k_2}+p_{k_2,k_3}=w_{k_{2}}+w_{k_{3}}\ge w_{k_{3}}=p_{k_1,k_3}$.
	
	We use $x_i\in \{0, 1\}$ to denote whether the requester $i$ is served or not. With the above settings, this special case of Cost-Constrained User Recruitment Problem is simplified to be just the instance of Knapsack Problem. Therefore, Cost-Constrained User Recruitment Problem is at least as hard as Knapsack Problem, the optimization version of which is NP-hard.
\end{proof}


\section{Optimal User Recruitment Strategy for Single Crowdsourcing User}
\label{sec:single_user}

One major challenge to solve the CURP lies in infinite decision variables $\textbf{\emph{l}}$, $\textbf{\emph{y}}$ for all $t\in[0, T]$. In this section, we consider a special case of CURP in which $J=1$, i.e., there is only one crowdsourcing user. We first introduce a proposition which captures a useful property of the problem. The property specifies the characteristic of an optimal decision which can reduce infinite decision variables.

\begin{figure}[t]
	\centering
	\includegraphics[width=6.2cm,height=2.7cm]{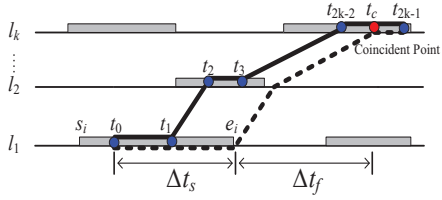}
	\caption{The illustration of proof of \textsc{Proposition} \ref{prop:2}: We compare the system performance between any optimal strategy $\Gamma$ (solid line) and adjusted policy (dotted line).}
	\label{fig:illus_p2} 
\end{figure}

\begin{proposition}
	\label{prop:2}
	Given any problem instance with $J$$=$$1$, there always exists one optimal decision $\Gamma^{*}$, satisfying the following condition: for each requester $i$, $\exists \tau_{i}\in [s_i, e_i]$ such that $y_{t}^{i}=0, \forall t\in [s_i, \tau_i]$ and $y_{t}^{i}=1, \forall t\in [\tau_i, e_i]$.
\end{proposition}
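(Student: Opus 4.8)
The plan is to argue by an \emph{exchange} (rearrangement): starting from an arbitrary optimal decision $\Gamma$, I will massage it into a decision $\Gamma^{*}$ with the claimed suffix structure, checking at every step that the budget in Eq. (5) is never exceeded and the system utility $U$ never drops; since $\Gamma$ is optimal, $\Gamma^{*}$ is then optimal too. Throughout I use the \emph{serve-when-present} convention: because serving is free (only movement is charged) and can only raise $U$, whenever the single user sits in $d_{i}$ during $[s_i,e_i]$ and is not in transit it is serving $i$, so $\{t:y_t^{i}=1\}$ is exactly the set of times the user is parked at $d_i$ inside $i$'s window, and the true decision is just the user's timed route. By Eq. (6) any two service sets are disjoint in time. (If several requesters share a region, I break ties so that at most one of them is served at a time and the excess service is reassigned to a single occupant; this leaves $U$ unchanged and lets the argument below apply per served requester.)

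The first step removes \emph{detours} and thereby establishes contiguity. Suppose the route leaves $d_i$ and later returns to it inside $[s_i,e_i]$, serving other requesters during the excursion. Replacing the entire excursion by simply staying at $d_i$ serves $i$ for the whole duration of the excursion; since the two transfers bracketing it consume time during which nothing is served, the time gained for $i$ is at least the time the excursion spent serving others, so $U$ does not fall. The transfer-time triangle inequality (Eq. (2)) keeps the shortened route time-realizable and the moving-cost triangle inequality (Eq. (1)) keeps its cost from rising. Iterating, I may assume each requester is served in a single contiguous block and the route never returns to a region once it has departed; label the blocks in time order as serving $\rho_1,\dots,\rho_n$.

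The second step \emph{right-aligns} every block to its deadline and is the crux. I sweep from the last block to the first. The last block must already end at $e_{\rho_n}$: otherwise the user could stay at $d_{\rho_n}$ past $b_n$ and serve $\rho_n$ longer, strictly raising $U$ and contradicting optimality. For an earlier block $k$ I push its service rightward toward $e_{\rho_k}$. If it reaches $e_{\rho_k}$ unobstructed, $U$ only grows, so by optimality it was already there; otherwise the push meets the already-pinned block $k+1$ and slides that block's start later by the same amount, its end staying pinned at $e_{\rho_{k+1}}$. The amount gained by $\rho_k$ then equals the amount lost by $\rho_{k+1}$, so $U$ is preserved, and since the sequence of regions is untouched the total moving cost is exactly that of $\Gamma$ and Eq. (5) still holds.

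The step I expect to be most delicate is the feasibility of this obstructed push: sliding block $k+1$'s start to $e_{\rho_k}+q_{\rho_k,\rho_{k+1}}$ must not drive it past its own deadline. This is precisely where optimality re-enters. If $e_{\rho_k}\ge e_{\rho_{k+1}}$ (a later-deadline requester served before an earlier-deadline one, yet blocked), then abandoning block $k+1$ and letting $\rho_k$ run all the way to $e_{\rho_k}$ gains $e_{\rho_k}-b_k$ while forfeiting only $e_{\rho_{k+1}}-a_{k+1}$, and the difference is at least the transfer time $q_{\rho_k,\rho_{k+1}}>0$, strictly improving $U$ and contradicting the optimality of $\Gamma$. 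Hence adjacent blocks of an optimal decision satisfy $e_{\rho_k}<e_{\rho_{k+1}}$, which keeps the shifted start within $[\,\cdot\,,e_{\rho_{k+1}}]$ and makes the cascade well defined; because each block's end is pinned before any block to its left is processed, pushing block $k$ merely contracts block $k+1$ from the left and propagates no further. After the sweep every served requester $i$ occupies exactly $[\tau_i,e_i]$, which is the asserted structure.
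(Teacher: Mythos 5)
There is a genuine gap, and it sits exactly where you predicted: the ``delicate'' feasibility step of the right-alignment sweep. Your exchange argument for ruling out $e_{\rho_k}\ge e_{\rho_{k+1}}$ counts the gain at $\rho_k$ and the forfeit at $\rho_{k+1}$, but ignores the knock-on delay that abandoning block $k+1$ inflicts on block $k+2$: the user now departs $d_{\rho_k}$ at $e_{\rho_k}$ rather than departing $d_{\rho_{k+1}}$ at $e_{\rho_{k+1}}$, so the arrival at $d_{\rho_{k+2}}$ can be later by up to $e_{\rho_k}-e_{\rho_{k+1}}+q_{\rho_k,\rho_{k+1}}$, which is precisely your claimed surplus. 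Since Eqs.~(1)--(2) are non-strict triangle inequalities, this cancellation is attainable. Concretely: regions $A,B,C$ with $q_{AB}=q_{BC}=10$, $q_{AC}=20$; requester windows $[0,100]$ at $A$, $[0,50]$ at $B$, $[60,200]$ at $C$; user starts at $A$. The decision serving $A$ on $[0,20]$, $B$ on $[30,50]$, $C$ on $[60,200]$ has utility $180$, which is optimal (any route visiting all three spends $20$ in transfer, capping service at $180$, and the two-stop alternatives such as $A$ on $[0,100]$ then $C$ on $[120,200]$ also give $180$). Yet its adjacent blocks satisfy $e_A=100>e_B=50$ and the push at $A$ is obstructed; abandoning $B$ gains $80$, loses $20$ at $B$ and another $60$ at $C$, netting $0$, not the claimed $\ge q_{AB}>0$. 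So your lemma that adjacent blocks of an optimal decision satisfy $e_{\rho_k}<e_{\rho_{k+1}}$ is false, and in this situation the cascade does propagate past block $k+1$ (block $k+1$ vanishes and block $k+2$ contracts), contradicting your assertion that ``pushing block $k$ merely contracts block $k+1$ from the left and propagates no further.'' As written, the sweep is not well defined on such instances.

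The paper's proof avoids any deadline-ordering invariant by a catch-up construction that absorbs arbitrary cascades in one comparison: once service of $i$ begins at $t_0$, the adjusted policy $\Gamma'$ serves $i$ through $e_i$, then retraces the \emph{same} region sequence as $\Gamma$ without serving anyone until the temporal-spatial coincidence point $t_c$, after which it copies $\Gamma$. On $[t_0,t_c]$ both policies traverse identical legs, so cost and total transfer time $\Delta t_f$ agree; $\Gamma'$ serves the entire non-transfer residue $e_i-t_0=t_c-t_0-\Delta t_f$, while $\Gamma$ serves at most the non-transfer time, so no utility is lost no matter how the displaced blocks shrink, vanish, or delay one another downstream. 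Your proof is repairable along the same lines -- replace the pin-the-end bookkeeping with this rejoin comparison whenever a push would make a block vanish -- but that repair essentially \emph{is} the paper's argument. Your detour-removal step and serve-when-present normalization are fine (the paper gets contiguity for free from the same construction), so the flaw is localized but real: the crux step needs the global rejoin accounting, not the local two-block exchange.
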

\begin{proof}
	To prove this proposition, we only need to show that any \emph{optimal} decision $\Gamma$ can be adjusted to be $\Gamma^{*}$ without incurring any additional cost. We consider two cases in $\Gamma$.

	\textbf{Case 1.} The crowdsourcing user does not serve $i$: We only treat $\tau_{i}$ as the ending time of $i$'s request, i.e., $\tau_{i}=e_i$.
	
	\textbf{Case 2.} The crowdsourcing user starts serving $i$ at $t_{0}$ where $t_{0}\in [s_{i}, e_i]$: Suppose that in $\Gamma$, the following regions and time intervals the crowdsourcing user works in are $l_1:[t_0,t_1]$, $l_2:[t_2,t_3]$, $\cdots$. We compare $\Gamma$ with another adjusted policy.
	
	Adjusted Policy $\Gamma^{'}$: The crowdsourcing user keeps serving $i$ in $[t_{0}, e_{i}]$ and then transfers to regions as the same order with $\Gamma$ and does not serve any requester until the temporal-spatial coincident point with $\Gamma$. Next, the user works under $\Gamma$.
	
	The comparison is illustrated in Fig. \ref{fig:illus_p2}. To compare $\Gamma$ with this adjusting policy, we only need to consider time interval $[t_0,t_c]$ where $t_c$ is the temporal-spatial coincident point. In $[t_0,t_c]$: As the user transfers to regions in the same order under $\Gamma$ and $\Gamma^{'}$, the transfer cost under $\Gamma^{'}$ is equal to $\Gamma$ and the transfer time $\Delta t_f$ is also equal. In terms of system performance, the utility under $\Gamma^{'}$ is $\Delta t_{s}=e_{i}-t_0=t_c-t_0-\Delta t_f$. The utility under $\Gamma$ would be no more than $\Delta t_s$ depending on whether the user keeps working when not in transfer under $\Gamma$. That is, $\Gamma^{'}$ does not decrease the system utility with incurring no additional cost.
	Finally, this adjusted policy can be adopted in each served request $i$ in $\Gamma$, therefore, by doing this, $\Gamma$ is adjusted to $\Gamma^{*}$-type strategy with $\tau_i=t_{0}$.
\end{proof}

From \textsc{Proposition} \ref{prop:2}, when the crowdsourcing user moves to a region to serve one request, it will not leave unless this request has been finished. That is, the crowdsourcing user's location only changes at a limited number of time points. This result will help us design an optimal dynamic programming algorithm. We introduce one virtual request $v_{0}$ with $s_{v_{0}}=e_{v_{0}}=0$. This virtual request stands for the crowdsourcing user's starting location. Then all requests (including $v_{0}$) are sorted according the ascending order of the ending time of these requests. Assume that the set of sorted tasks is $\mathcal{V}=\{v_{0}, v_{1},\dots,v_{I}\}$ and the ending times satisfy $e_{0}\le e_{1}\le \dots\le e_{I}$.

Let $\Phi_{i}$ be the set containing all requests whose ending times plus transfer time are smaller than $v_{i}$'s ending time. Precisely, we have $\Phi_{i}=\{v_{j}|e_{j}+q_{ji}\le e_{i}, j=0, 1,\dots, I\}$ and $\Phi_{0}=\{v_{0}\}$. According to \textsc{Proposition} \ref{prop:2}, the crowdsourcing user keeps serving one requester until the request is completed. Therefore, $\Phi_{i}$ actually contains all possible prior one requests before coming to $v_i$.

Based on the above analysis, let $U[i][c]$ be the optimal system utility gain when the crowdsourcing user is at $v_{i}$ with the total cost no more than $c$. We have,
\begin{equation}
U[i][c]=\left\{
\begin{aligned}
&-\infty,&&c<0\ \textrm{or}\ v_0\notin\Phi_{i}\\
&0,&&i=0,c\ge 0\\
&\max_{j\in\Phi_{i}\backslash\{i\}}U[j][c_{ji}]+u_{ji},&&i\ge 1,c\ge 0 \\
\end{aligned}\right.
\end{equation}
where $c_{ji}=c-p_{ji}$, $u_{ji}=e_{i}-\max\{s_i,e_{j}+q_{ji}\}$ and $p_{ji}$ stands for the moving cost from $v_j$'s location to $v_{i}$'s location. Eq. (11) shows how the value of $U[i][c]$ can be recursively computed. When $c<0$, $U[i][c]$ is not feasible because the cost can not be negative. $\Phi_{i}\not\supset\{v_0\}$ means the user can not reach $i$ in time, hence $U[i][c]$ is also infeasible. Considering that the CURP is a maximization problem, we make the infeasible $U[i][c]=-\infty$. Otherwise, when $c\ge 0$ and $\Phi_{i}\supset\{v_0\}$, $U[i][c]$ can be derived from all possible prior one subproblems corresponding to $U[j][c-p_{ji}], j\in\Phi_{i}$. Moving cost inequality Eq. (1) and transfer time inequality Eq. (2) ensure the correctness of Eq. (11). With Eq. (1) and Eq. (2), the crowdsourcing user would transfer from one region to another region directly without searching cost-shortest or time-shortest path.

\begin{algorithm}[tbp]
	\caption{Optimal Routing Scheduling}
	\label{alg:ors} 
	\begin{algorithmic}[1]
		\Require
		\textbf{P}, \textbf{Q}, all sorted requests $v_i$ where $i=0,1,\cdots,I$;
		\Ensure
		Decision variables \textbf{\emph{l}} and \textbf{\emph{y}}.
		\State Calculate prior one request set $\Phi_{i}$ for each $v_{i}$: $\Phi_{0}\leftarrow\{v_0\}$ and $\Phi_{i}\leftarrow\{v_{j}|e_{j}+q_{ji}\le e_{i}, j=0, 1,\dots, I\},\forall i=1,\cdots,I$;
		\State Initialize the boundary condition: $U[i][c]\leftarrow -\infty, \forall c<0$ or $\forall \Phi_{i}\not\supset\{v_0\}$ and $U[0][c]\leftarrow 0, \forall c\ge 0$;
		\State Calculate all other subproblems $U[i][c]$ based on Eq. (11):
		$U[i][c]\leftarrow\max_{j\in \Phi_{i}\backslash\{i\}} U[j][c_{ji}]+u_{ji},\forall i\ge 1,c\ge 0,v_0\in\Phi_{i}$;
		\State \emph{\textbf{l}} and \emph{\textbf{y}} corresponding to $\max_{i\in \{0,1,\cdots,I\}}U[i][C]$;
	\end{algorithmic}
\end{algorithm} 

Based on Eq. (11), we design Algorithm \ref{alg:ors}, an optimal scheduling algorithm. Step 1 computes prior one request sets for all sorted requests while Step 2 and 3 show the complete dynamic programming process. The optimality of this algorithm is analyzed as follows.
\begin{theorem}
	\label{thm:alg_analysis}
	Algorithm \ref{alg:ors} has an optimal solution to the CURP problem ($J$=$1$) under the pseudo-polynomial time computation complexity $\mathcal{O}(I^2(C+p_{max}))$ where $p_{max}$ is the maximum value in \textbf{\emph{P}}.
\end{theorem}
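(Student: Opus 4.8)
The proof splits into two independent claims: that Algorithm~\ref{alg:ors} returns a globally optimal schedule, and that it runs in time $\mathcal{O}(I^2(C+p_{max}))$. I would treat optimality first, since the complexity count is routine once the recurrence is justified.

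For optimality, the starting point is \textsc{Proposition}~\ref{prop:2}, which lets me restrict attention to a $\Gamma^{*}$-type optimal decision: once the single crowdsourcing user begins serving a requester it stays until that request ends, so the served requesters are visited as an ordered chain and the user's location changes only at request boundaries. This reduces the continuous-time problem to selecting an ordered subsequence of $\mathcal{V}=\{v_0,v_1,\dots,v_I\}$ (with $v_0$ the fixed start), and it is exactly this discrete structure that the table $U[i][c]$ encodes. I would then prove the recurrence in Eq.~(11) by an exchange (cut-and-paste) argument: fix any $\Gamma^{*}$-type schedule that ends its service chain at $v_i$ with total cost at most $c$, and let $v_j$ be the request served immediately before $v_i$. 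Because transfers and moves may be taken directly (justified by the triangle inequalities Eq.~(1) and Eq.~(2)), feasibility of this step is precisely the membership $v_j\in\Phi_i$, the utility contributed at $v_i$ is exactly $u_{ji}=e_i-\max\{s_i,e_j+q_{ji}\}$, and the incurred cost is $p_{ji}$. If the prefix ending at $v_j$ were not itself optimal for budget $c-p_{ji}$, replacing it with the optimum $U[j][c-p_{ji}]$ would strictly increase the total utility without exceeding the budget, contradicting optimality; hence the value of any such schedule is at most $\max_{j\in\Phi_i\backslash\{i\}}U[j][c-p_{ji}]+u_{ji}$, and conversely every term on the right is achievable, giving equality. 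Sorting $\mathcal{V}$ by ending time guarantees that every $v_j\in\Phi_i$ satisfies $e_j\le e_i$, so all subproblems referenced on the right-hand side are computed before $U[i][c]$; the boundary cases ($c<0$ or $v_0\notin\Phi_i$ mapped to $-\infty$, and $U[0][c]=0$) handle infeasibility and the base state. Taking $\max_{i}U[i][C]$ in Step~4 therefore returns the optimal system utility, and standard back-pointer recovery yields the corresponding $\textbf{\emph{l}}$ and $\textbf{\emph{y}}$.

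For the running time, I would simply count the table. The index $i$ ranges over the $I+1$ requests, and the budget index $c$ ranges over $\mathcal{O}(C+p_{max})$ integer values (all attainable cumulative costs up to $C$, with slack $p_{max}$ from the per-step offsets $c_{ji}=c-p_{ji}$); filling each cell evaluates a maximum over $\Phi_i\backslash\{i\}$, which contains at most $I$ elements, so each cell costs $\mathcal{O}(I)$. Precomputing all $\Phi_i$ in Step~1 costs $\mathcal{O}(I^2)$ and is dominated. Multiplying the number of cells by the per-cell work gives $\mathcal{O}((I+1)(C+p_{max})\cdot I)=\mathcal{O}(I^2(C+p_{max}))$, which is pseudo-polynomial because it is linear in the numeric value of $C$ rather than in its bit-length.

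The main obstacle is the optimality argument rather than the bookkeeping: I expect the delicate point to be verifying that the ``prior one'' transition loses nothing, i.e., that the commitment property of \textsc{Proposition}~\ref{prop:2} really forces the served set into a single linear chain, so that any request inserted between $v_j$ and $v_i$ is itself captured as an intermediate chain element (and hence by a nested application of the recurrence), and that the increment $u_{ji}$ correctly accounts for the possible idle gap through $\max\{s_i,e_j+q_{ji}\}$ without ever letting the user collect utility from two requesters at once. Once these are pinned down, the exchange argument and the complexity count are standard.
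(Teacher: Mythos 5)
Your proposal is correct and follows essentially the same route as the paper's proof: it invokes \textsc{Proposition}~\ref{prop:2} to reduce to schedules that change location only at request boundaries, justifies the recurrence in Eq.~(11) via optimal substructure over the $|\Phi_i|$ ``prior one'' states, and counts the $(I+1)\times(C+p_{max}+1)$ table cells at $\mathcal{O}(I)$ work each to get $\mathcal{O}(I^2(C+p_{max}))$. Your explicit cut-and-paste exchange argument is simply the standard, more detailed justification of the optimal-substructure claim that the paper asserts directly, so the two proofs coincide in substance.
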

\begin{proof}
The CURP($J$=$1$) problem has optimal substructure. From the definition of $U[i][c]$ and \textsc{Proposition} \ref{prop:2}, there exist $|\Phi_{i}|$ possible prior one states for $U[i][c]$. And the optimality of $U[i][c]$ demands the maximum utility gain of these $|\Phi_{i}|$ subproblems. Therefore, the optimal solution of $U[i][c]$ contains optimal solutions to all these prior one $|\Phi_{i}|$ subproblems. Algorithm 1 explores every subproblem of $U[i][c]$ and computes $U[i][c]$ based on the fact of optimal substructure, hence, it produces an optimal solution to the CURP($J$=$1$) problem.
Eq. (11) indicates $U[i][c](i\le I, c\le C)$ has overlapping subproblems and there exists at most $(I+1)$$\times$$(C+p_{max}+1)$ subproblems according different values of $i$ and $c$. In order to solve each subproblem, we need to compare $|\Phi_{i}|$$(\le I)$ subproblems. Therefore, Algorithm 1 has a pseudo-polynomial time complexity $\mathcal{O}(I^2(C+p_{max}))$.
\end{proof}

\section{General Optimization Strategy}
\label{sec:general_users}

We discuss the general CURP with multiple crowdsourcing users, i.e., $I$$\gg$$J$$\ge$$1$. We design a Graph-Partition-based Algorithm (GPA) by extending the above Optimal Routing Algorithm.
The basic idea of GPA is to \emph{decrease the crowdsourcing user's transfer number on the high-cost link}. Since the total available cost $C$ is limited, decreasing one high-cost link transfer means increasing more low-cost link transfers, hence higher chances of serving more requests. To realize this idea, we adopt a graph-partition method. All requests are divided into several disjoint partitions and one crowdsourcing user is only responsible for one partition. Thus, according to reasonable partition rules, those high-cost links between different partitions can be decreased or even removed. There are two steps of GPA: Requests Partition and Available Cost Allocation. We show the detailed explanations in the following subsections.

\subsection{Request Partition by Normalized Cut Spectral Clustering}
We adopt normalized cut spectral clustering algorithm to do request partition. In the clustering process, it is important to calculate the similarity relation between any two requests. In our problem, we consider that any one good partition should possess the following two properties. First, the requests in one partition should have fewer overlapping parts. Second, one partition has smaller connectivity cost inside. We define two metrics to measure these two properties.

\begin{figure}[t]
	\centering
	\includegraphics[width=6.8cm,height=2.5cm]{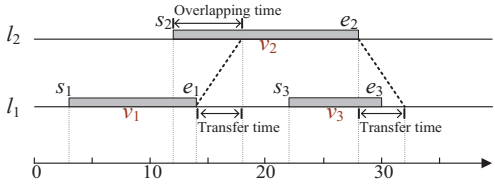}
	\caption{An example to show the calculation and meaning of $\theta$ and $\gamma$.}
	\label{fig:gpa}
\end{figure}

\subsubsection{Overlapping degree} For any two requests $v_{i}$ and $v_j$ with $e_i\le e_j$, we define their \emph{overlapping degree} $\theta$ as follows:

\begin{equation}
\theta(v_i, v_j)\triangleq 1-\frac{(e_j-\max\{s_j,e_i+q_{ij}\})^{+}}{e_j-s_j},
\end{equation}
in which $x^{+}$$=$$\max\{x,0\}$.

\subsubsection{Connectivity degree} The relevant \emph{connectivity degree} $\gamma$ between $v_{i}$ and $v_j$ with $e_i\le e_j$ is defined as  $ \gamma(v_{i},v_j)\triangleq \frac{p_{ij}}{p_{max}} $.

We give an instance to illustrate the calculation of $\theta$ and $\gamma$ defined above. As shown in Fig. \ref{fig:gpa}, there are three requests $v_1, v_2$ and $v_3$. Their request times are $[3, 14]$, $[12, 28]$ and $[22, 30]$. The transfer time between region $l_1$ and $l_2$ is 4 while the moving cost $p_{l_1,l_2}$$=$$p_{l_2,l_1}$$=$$1$. For the pair $(v_1, v_2)$, any crowdsourcing user can only transfer from $v_1$ to $v_2$ based on \textsc{Proposition} \ref{prop:2}. Their maximal non-overlapping time is $e_2$$-$$e_1$$-p_{12}$$=$$10$, and $\theta(v_1,v_2)$$=$$1-10/(28-12)$$=$$0.375$. Likewise, $\theta(v_1,v_3)$$=$$0$ and $\theta(v_2,v_3)$$=$$1$. Since $p_{max}$$=$$1$, therefore, $\gamma(v_1,v_2)$$=$$1$, $\gamma(v_1,v_3)$$=$$0$ and $\gamma(v_2,v_3)$$=$$1$. Obviously, both $\theta$ and $\gamma$ belong to $[0,1]$ and can reflect its meaning truly.

\subsubsection{Similarity weight} The \emph{similarity weight} between $v_{i}$ and $v_{j}$ is calculated by $\alpha_{ij}=\frac{1}{\theta+\beta\gamma}$ where $\beta$ is a weight knob of overlapping degree and connectivity degree.

Let $\textbf{W}=[\alpha_{ij}]$ be the similarity matrix among requests and $\textbf{L}$ be the diagonal matrix whose $i$-th diagonal element is the sum of the elements in the $i$-th row of $\textbf{W}$.
Let $D_{r}$ be the $r$-th partitioned request set and $D=\cup_{r=1}^{J}D_r$. Let $\textbf{e}_{r}$ be a $I\times 1$ indicator vector for the $r$-th partition, i.e., $\textbf{e}_{r}\in \{0,1\}^{I}$, and have a nonzero component exactly when the requester is in the $r$-th partition. Our goal is to find a good partition $\textbf{E}=(\textbf{e}_1,\textbf{e}_2,\dots,\textbf{e}_{J})$, which follows the above two properties, i.e.,
\begin{equation}
\min_{\textbf{E}} \sum_{r=1}^{J}\frac{\textbf{e}_{r}^{T}(\textbf{L}-\textbf{W})\textbf{e}_{r}}{\textbf{e}_{r}^{T}\textbf{L}\textbf{e}_{r}}=\sum_{r=1}^{J}\frac{\sum_{v_{i}\in D_{r}, v_{j}\notin D_{r}}\alpha_{ij}}{\sum_{v_{i}\in D_{r}, v_{j}\in D}\alpha_{ij}}.
\label{eq:good_part}
\end{equation}

The popular normalized spectral clustering algorithm can be adopted to solve this problem. Normally, the normalized spectral clustering algorithm first relaxes the above problem to be an eigenvalue problem and then rounds the relaxation solution to the integer solution satisfying the binary constraint.

\subsubsection{Finding a relaxation optimization solution}
According to \cite{bach2004learning}, one lower bound of Eq. (\ref{eq:good_part}) is given as
\begin{equation}
\min_{\textbf{Y}^{T}\textbf{Y}=\textbf{I}} \texttt{tr} \textbf{Y}^{T}(\textbf{I}-\textbf{L}^{-1/2}\textbf{W}\textbf{L}^{-1/2})\textbf{Y}=\sum_{j=1}^{J}\lambda_{j},
\end{equation}
in which $\textbf{Y}\in R^{I\times J}$ and $\lambda_{j}$ refers to the $j$-th smallest eigenvalue of matrix $\textbf{I}-\textbf{L}^{-1/2}\textbf{W}\textbf{L}^{-1/2}$. This bound can be attained at $\textbf{Y}=\textbf{U}$ where $\textbf{U}$ is a matrix whose columns are the eigenvectors corresponding to the first $J$ eigenvalues of $\textbf{I}-\textbf{L}^{-1/2}\textbf{W}\textbf{L}^{-1/2}$. The relaxation solution of Eq. (\ref{eq:good_part}) can be calculated by $\hat{\textbf{E}}=\textbf{L}^{-1/2}\textbf{U}$.

\subsubsection{Rounding the relaxation solution to binary solution}
$\hat{\textbf{E}}$ provides an approximation of $\textbf{E}$ and there may exist non-integer components in $\hat{\textbf{E}}$. Therefore, how to round $\hat{\textbf{E}}$'s components to be binary should be studied. A simple $k$-means algorithm is adopted to achieve this goal. For each row of $\hat{\textbf{E}}$, it first normalizes the row to norm 1, i.e., $e_{ij}\leftarrow e_{ij}/(\sum_{m=1}^{J}e_{im})^{1/2}$. Then these rows are clustered into $J$ clusters by using a standard $k$-means algorithm.

\subsection{Branch-and-Bound/Greedy Cost Allocation}
Suppose that after normalized cut spectral clustering algorithm, all requests are classified into $J$ partitions, denoted as $\{D_{r}^{*}|\ r=1,..., J\}$. For each crowdsourcing user, the average Euclidean distance to all requests in any partition is calculated. Then the user is dispatched to the partition with the smallest distance. For each generated partition, Algorithm 1 is leveraged to seek the optimal sub-decision. We use $U_{r}[c]$ ($0\le c\le C$) to represent the maximum utility given the available cost $c$ on the $r$-th partition $D_{r}^{*}$.

As the maximum total available cost is limited, we should allocate these available costs to different partitions to maximize the system utility. Let $c_r$ denote the cost allocated on $r$-th partition and this problem can be formulated as follows,
\begin{equation}
\begin{aligned}
\max \sum_{r}U_{r}[c_{r}] \qquad \textrm{s. t.}\; \sum_r c_r\le C.
\end{aligned}
\end{equation}
The problem is a combinatorial optimization problem. We only know that $U_{r}[c]$ is monotonously increasing in $c$ as Algorithm \ref{alg:ors} gives the optimal solution given $c$. Based on this monotonic property, we first propose an optimal branch-and-bound search approach to allocate total available cost.


\subsubsection{Branch-and-bound cost allocation} Without loss of generality, in the search process, we assume that the cost $\{c_1, c_2,\cdots, c_J\}$ on different partitions is sequentially determined. Then, we have the following proposition:

\begin{proposition}
	\label{prop:3}
	$g_k$ is the total amount of system utility gain for the $k$-th partition having been branched and $h_{k}$ is the corresponding cost having been allocated. The upper bound on the total amount of system utility gain for the remaining $J-k$ partitions is $ 	u_{k}=g_k+\sum_{r=k+1}^{J} U_{r}[C-h_{k}] $.
\end{proposition}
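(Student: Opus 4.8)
The plan is to establish the upper bound $u_k$ by showing that no feasible allocation of the remaining budget to partitions $k+1,\dots,J$ can yield more total utility than $\sum_{r=k+1}^{J} U_r[C-h_k]$. First I would fix the branching state: the first $k$ partitions have been assigned costs summing to $h_k$, contributing a realized utility gain of exactly $g_k = \sum_{r=1}^{k} U_r[c_r]$. The budget constraint $\sum_r c_r \le C$ then forces the remaining partitions to share at most $C - h_k$ among them, i.e., any feasible completion satisfies $\sum_{r=k+1}^{J} c_r \le C - h_k$ with each $c_r \ge 0$.

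The key step is to bound each remaining term $U_r[c_r]$ individually. Since by \textsc{Theorem} \ref{thm:alg_analysis} the function $U_r[\cdot]$ computed by Algorithm \ref{alg:ors} is the optimal utility on partition $r$ given a cost cap, and since the paper has already noted that $U_r[c]$ is monotonically nondecreasing in $c$, I would use $c_r \le C - h_k$ (which holds for every single $r \ge k+1$ because all costs are nonnegative and their sum is at most $C - h_k$) to conclude $U_r[c_r] \le U_r[C - h_k]$ for each such $r$. Summing these per-partition inequalities over $r = k+1, \dots, J$ and adding the fixed contribution $g_k$ yields
\begin{equation*}
g_k + \sum_{r=k+1}^{J} U_r[c_r] \;\le\; g_k + \sum_{r=k+1}^{J} U_r[C - h_k] \;=\; u_k,
\end{equation*}
which is precisely the claimed bound, valid uniformly over all feasible completions.

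The main obstacle, and the only point requiring real care, is justifying that each individual $c_r$ may be bounded by the \emph{entire} residual budget $C - h_k$ rather than by its own share. This is a deliberate relaxation: the bound $u_k$ is intentionally loose because it ignores the competition for budget \emph{among} the remaining partitions (it effectively grants every remaining partition the full residual budget simultaneously). I would make explicit that this is exactly why $u_k$ is a valid \emph{upper} bound suitable for pruning in branch-and-bound — it never underestimates the achievable utility, so no optimal completion is ever erroneously discarded. The monotonicity of $U_r$ is what makes this relaxation sound, and I would briefly note that monotonicity follows immediately because enlarging the cost cap only enlarges the feasible region of the subproblem Algorithm \ref{alg:ors} optimizes over, so its optimal value cannot decrease.
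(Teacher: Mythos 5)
Your proof is correct and follows essentially the same route as the paper's: monotonicity of $U_r[c]$ in $c$ plus granting each remaining partition the full residual budget $C-h_k$, which is exactly the relaxation the paper invokes. You simply spell out the steps (the per-partition bound $c_r \le C-h_k$, the summation, and why monotonicity holds) that the paper's two-line proof leaves implicit.
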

\begin{proof}
	$U_{r}[c]$ is monotonously increasing in $c$ and the remaining $J-k$ partitions all take the maximum available cost. Therefore, $u_{k}$ provides an upper bound on the system utility gain for the remaining $J-k$ partitions.
\end{proof}

Based on \textsc{Proposition} \ref{prop:3}, the branching process is a depth-first search over subproblems allocating subset of the available cost, with the ones of the biggest upper bounds being searched first. Each time when a leaf node in the search tree is reached, all other branches with smaller values of upper bounds are pruned. The optimal solution will be found until all the leaf nodes have been either pruned or searched.

\subsubsection{Greedy cost allocation}
Although the branch-and-bound search process efficiently avoids local optimum and achieves the optimal cost allocation, its time consumption seems unacceptable as $C$ may be too large.
To cope with large $C$ case, a greedy approach is designed. Specifically, we first give a maximum step length $c_{0}$ and keep a temporary cost allocation vector $\textbf{c}^{'}=\{c_1^{'},c_2^{'},\cdots,c_J^{'}\}$. Then, we calculate
cost-increment-efficiency of $U_r$ in $[c_i^{'},c_i^{'}+c_0]$ and search the cost increment $(r^{*},c_s^{*})=\arg \max \{\frac{U_r[c_i^{'}+c_s]-U_r[c_i^{'}]}{c_s}| \forall r, c_s\le c_0\}$. Update $c_{r^{*}}^{'}=c_{r^{*}}^{'}+c_s^{*}$. This process is repeated until the cost is beyond the total available cost.


From our experimental evaluation, $U_r[c_r]$ can be treated as an approximate increasing concave function with some small fluctuation.
This characteristic ensures the efficiency of greedy method. Actually, if $U_r[c_r]$ is strict increasing concave, the optimal cost allocation would be achieved at the equal derivative point by solving the KKT (Karush–Kuhn–Tucker) condition, i.e., $\frac{\partial U_m}{\partial c_m}=\frac{\partial U_n}{\partial c_n}$. The greedy method would achieve this optimal solution in this case. However, small fluctuation may break this optimality. In our greedy method, the variable step length in $[1,c_0]$ can relieve the adverse effect of small fluctuation. Therefore, greedy approach can generally achieve a good solution.

\begin{figure}[t]
	\centering
	\subfigure[Random waypoint model]{
		\includegraphics[width=0.225\textwidth]{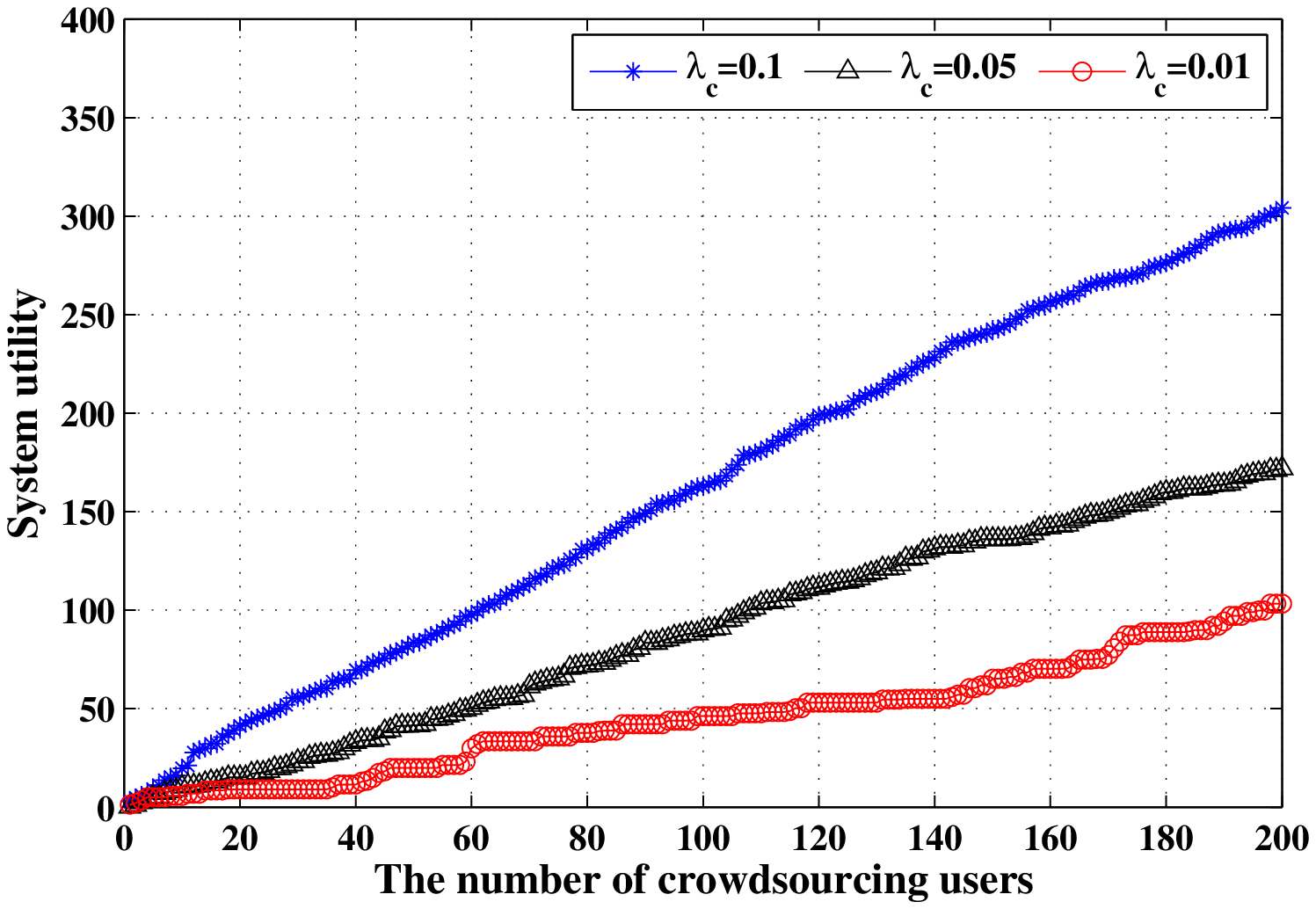}
		\label{fig:rwm}
	}
	\subfigure[Campus waypoint model]{
		\includegraphics[width=0.225\textwidth]{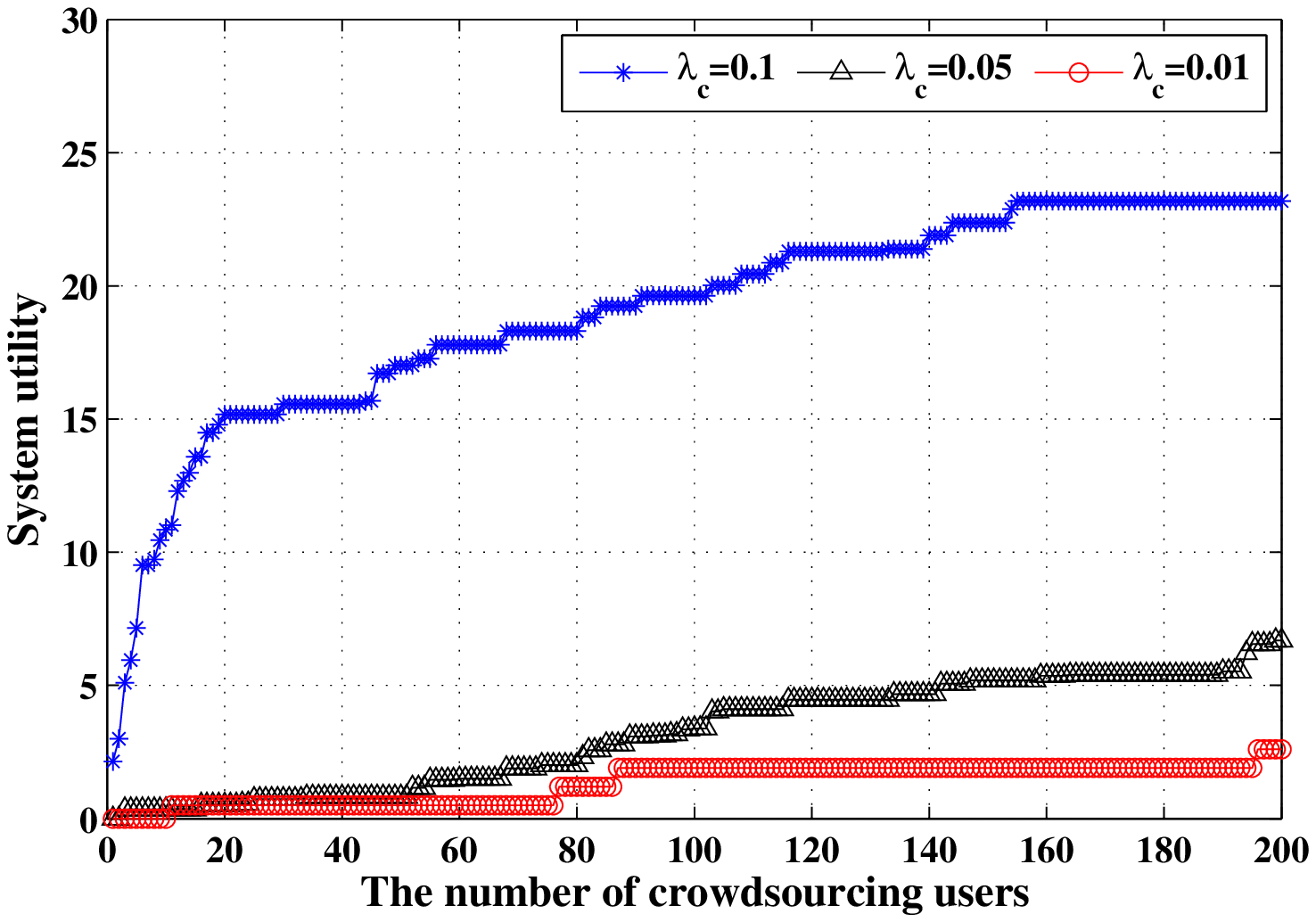}
		\label{fig:cwm}
	}
	\caption{The influence of user scale on system utility under two different mobility models without movement control.}
	\label{fig:wolf2}
\end{figure}

\section{Performance Evaluation}
\label{sec:perform}


\subsection{Simulation Setup}
In our simulation, we consider an area which is divided into $50\times45$ regions and each region is a square with side length of 5 meters. Two mobility models are adopted to simulate the track of users, i.e., \emph{random waypoint model} \cite{broch1998performance} and \emph{campus waypoint model} \cite{mcnett2005access}. These two models depict different occasions: the former one means relatively uniform node densities with simulation areas while the later one represents non-uniform node densities. The model parameters are set based on some well-known benchmarks. Specifically, in these two models, one crowdsourcing user's residence time in one region follows the Pareto distribution \cite{lu2016cooperative} with shape parameter 5 and scale parameter 2 (2 is the minimum possible of the residence time). The average view time per request follows the gamma distribution \cite{zhou2015video} with shape parameter 4 and scale parameter 2. In each region, the requests come following the Poisson process and the total arrival rate $\lambda_{c}$ in this area is set to be in the interval [0.01, 0.1].



\begin{figure}[t]
	\centering
	\subfigure[Available cost]{
		\includegraphics[width=0.225\textwidth, height=0.12\textheight]{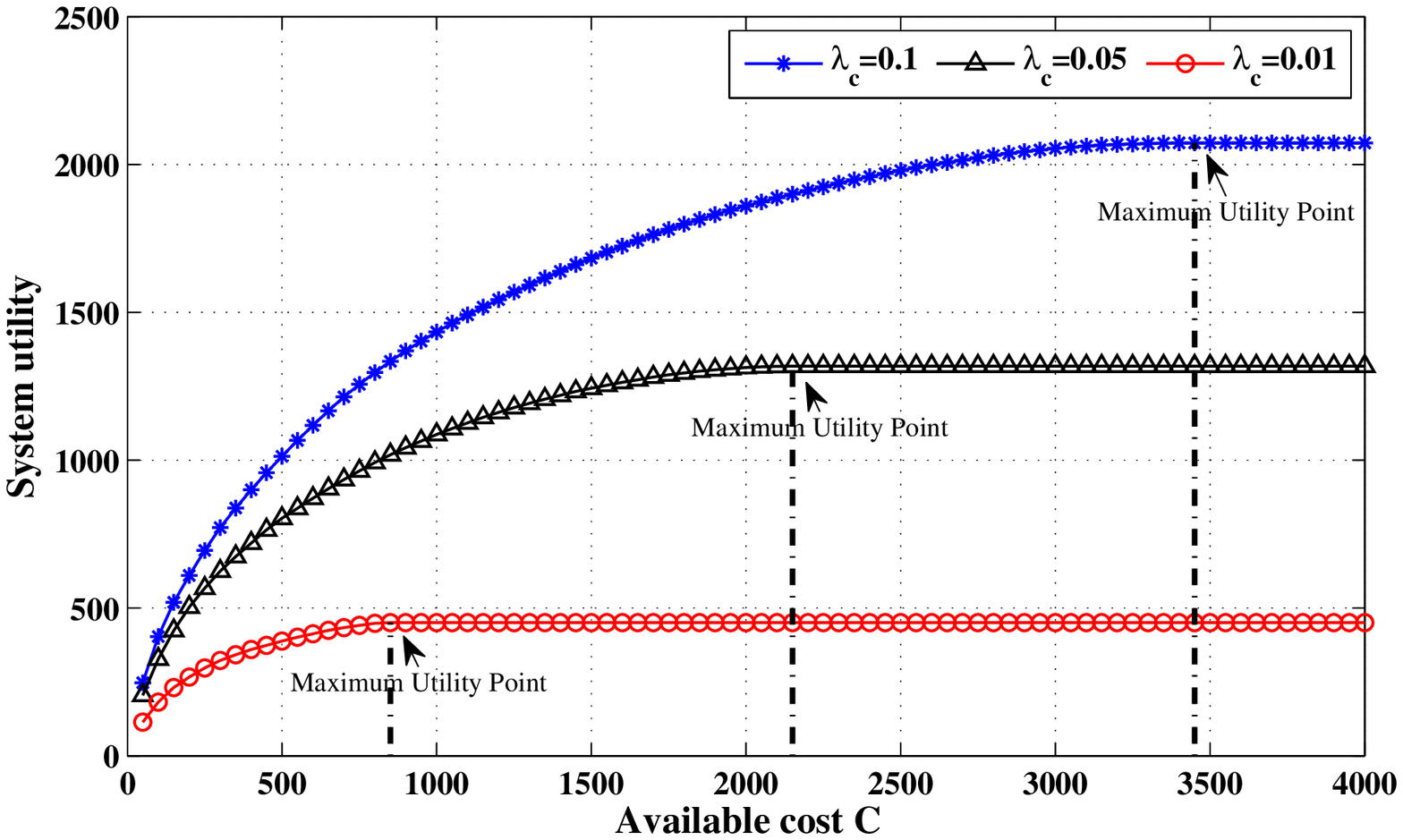}
		\label{fig:cost_change}
	}
	\subfigure[Total available cost]{
		\includegraphics[width=0.225\textwidth, height=0.12\textheight]{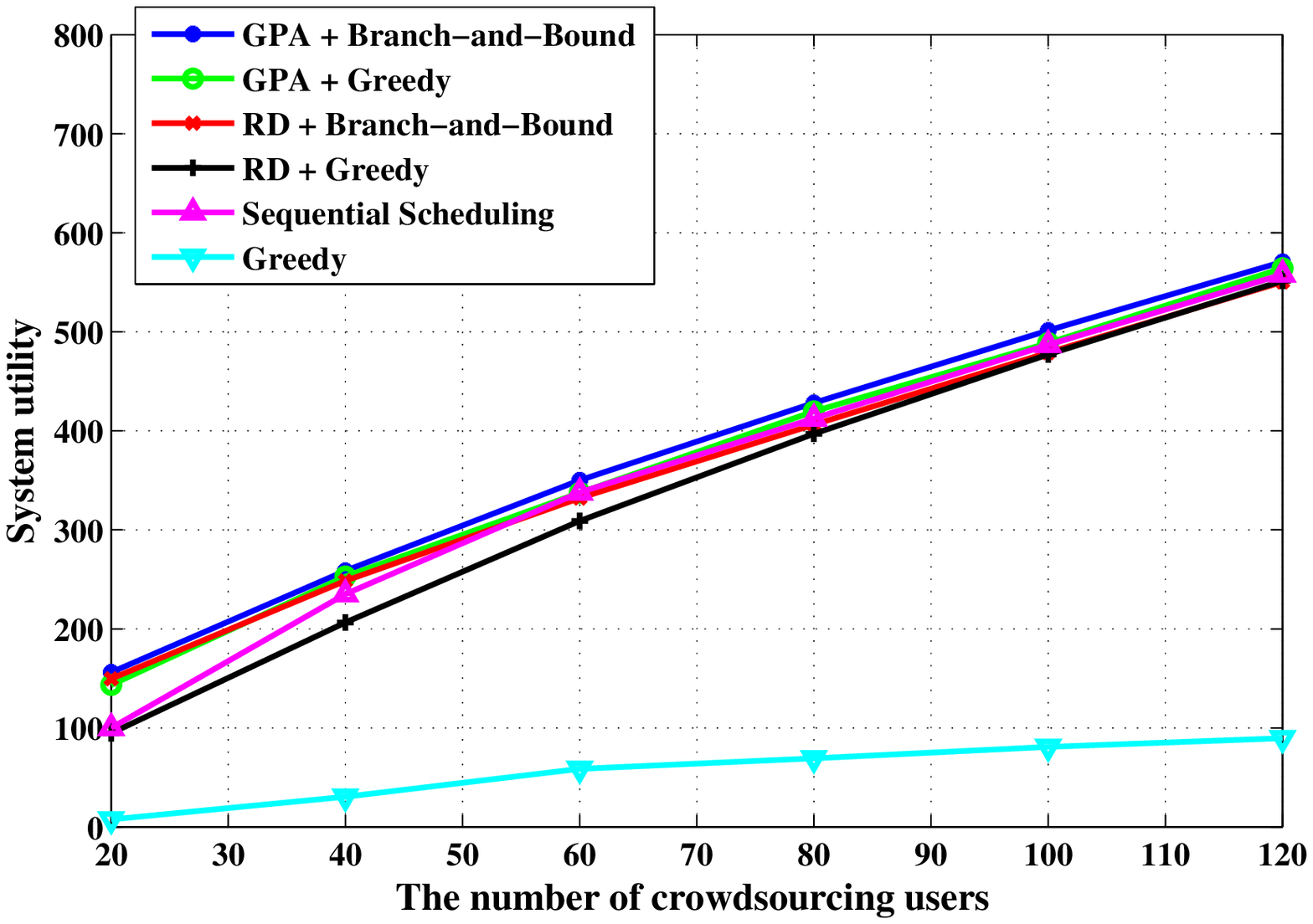}
		\label{fig:utility_cost}
	}
	\caption{The influence of available cost on system utility. The left-figure shows random waypoint model with movement control. The right-figure evaluates the total available cost $C$ on the optimality with $\lambda_c=0.1, J=3$.}
	\label{fig:cost_effect}
\end{figure}

In order to do comparative analysis, we realize some other benchmark algorithms besides our algorithm. The corresponding algorithmic are \emph{Region Division (RD) + Branch-and-Bound/Greedy Cost Allocation}, \emph{Sequential Scheduling} and \emph{Greedy Scheduling}.

\begin{figure*}[t]
	\centering
	\subfigure[System utility with the increase of $J$]{
		\label{fig:utility_n} 
		\includegraphics[width=5.7cm,height=0.11\textheight]{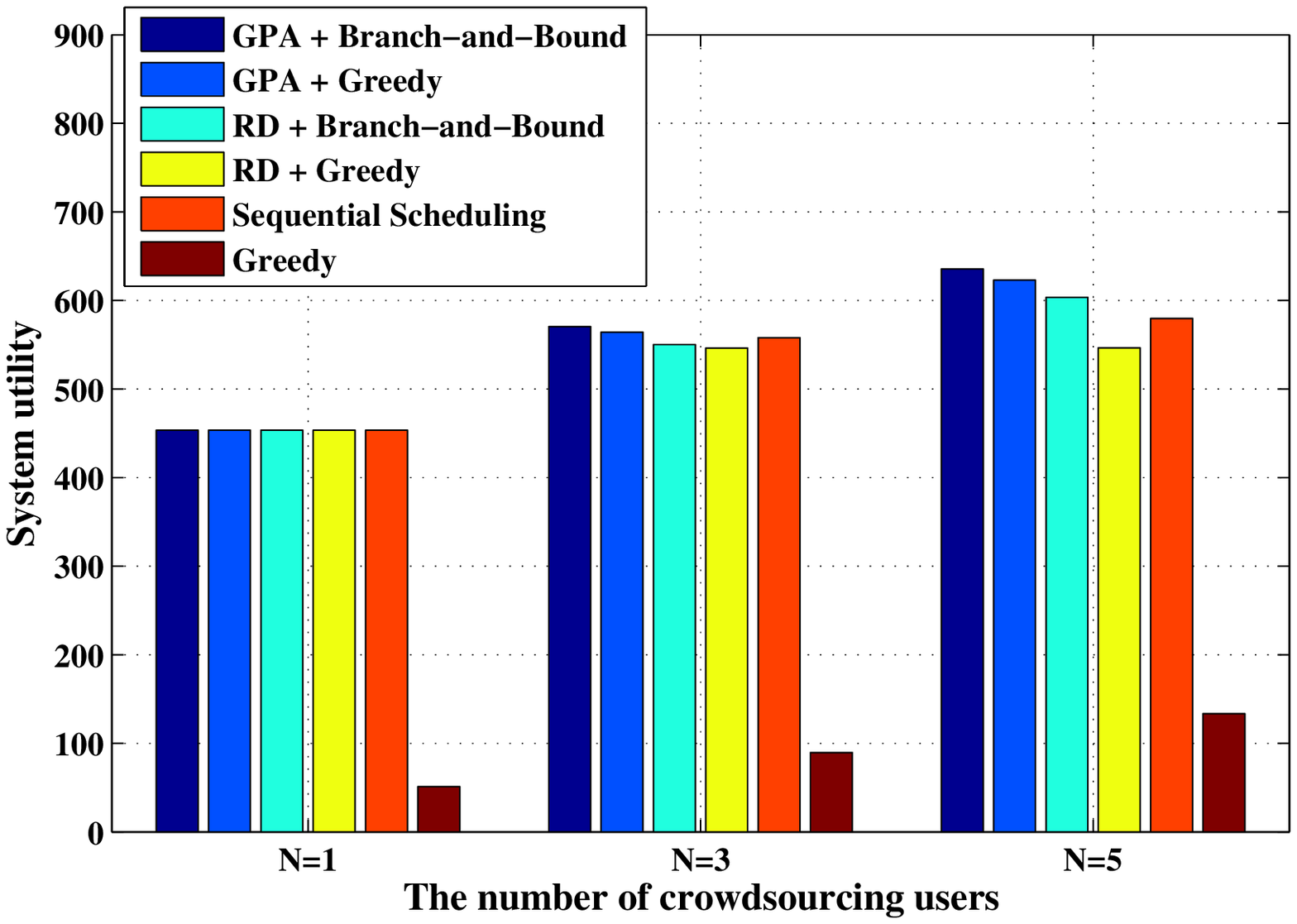}}
	\hspace{1pt}
	\subfigure[Time consumption with the increase of $J$]{
		\label{fig:time_n} 
		\includegraphics[width=5.7cm,height=0.11\textheight]{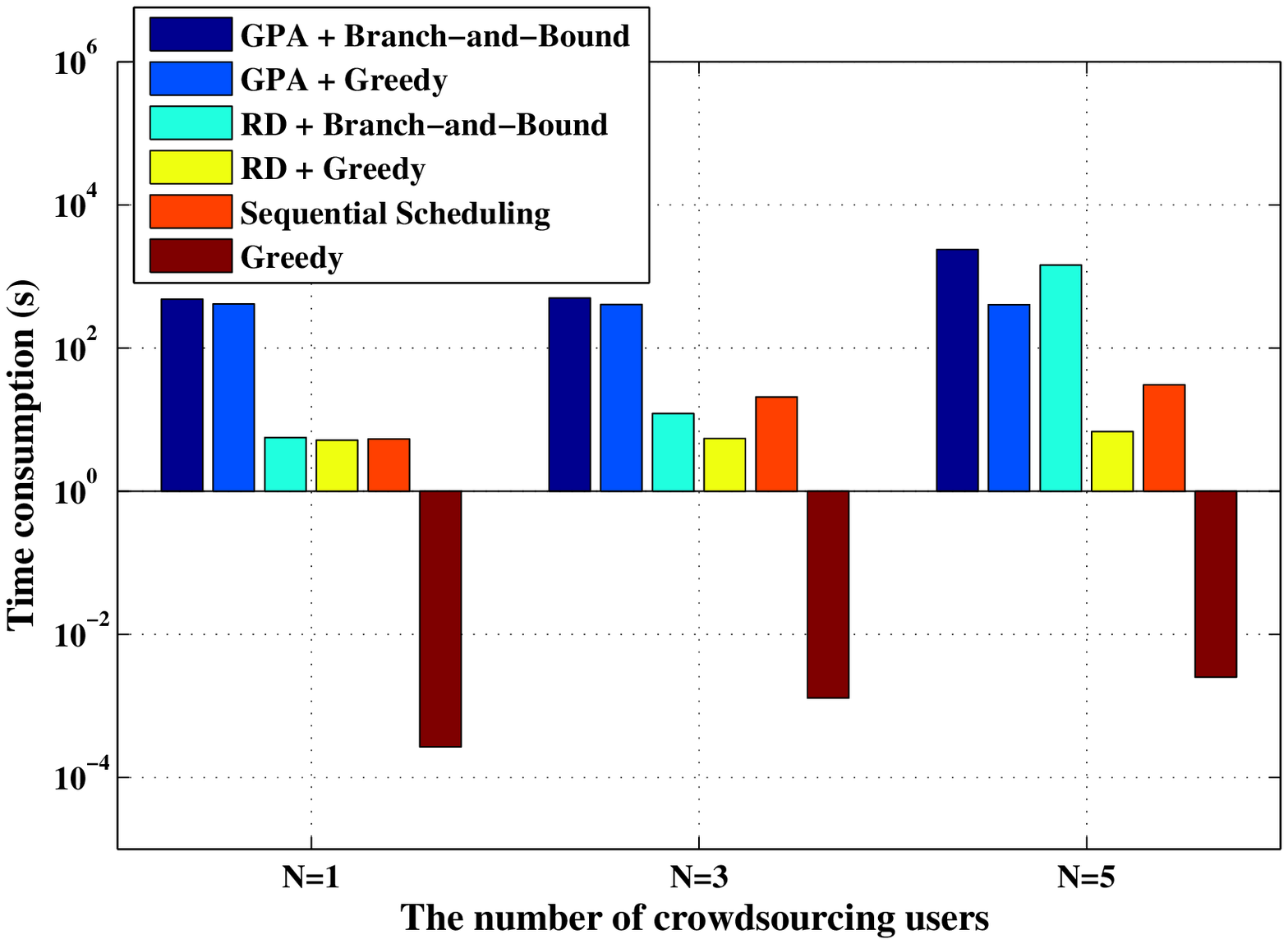}}
	\hspace{1pt}
	\subfigure[Cost allocation fairness with the increase of $J$]{
		\label{fig:mad_n} 
		\includegraphics[width=5.7cm,height=0.11\textheight]{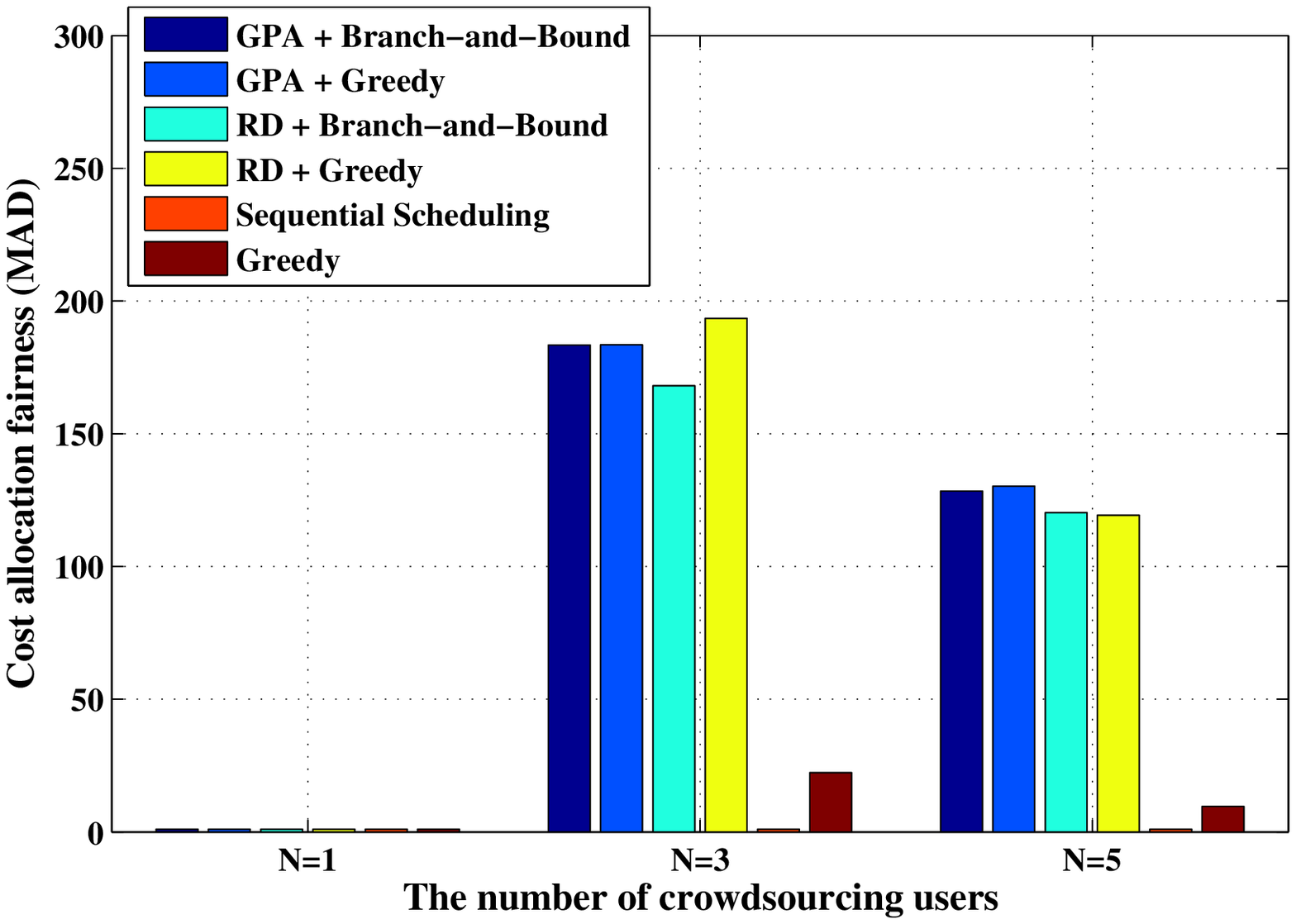}}
	\caption{The effect of users' number $J$ on the optimality, execution time and cost allocation with $\lambda_c=0.1, C=120$.}
	\label{fig:effect}
\end{figure*}

\subsection{Movement Control Influence Evaluation}
Here, we consider the effects of different user scales and different request arrival rates under the two mobility models. 
The scale of users in this area would reflect the chance one request can be served. This reflection is observed visually from Fig. \ref{fig:wolf2}, the system utility increases as the growth of user scale in this area under both two mobility models. Moreover, different from smooth growth curves under random waypoint model (Fig. \ref{fig:rwm}), the curves under campus waypoint model are rough. This may be because there exist some ``hot'' regions under campus and the users would have little possibility of meeting the requests from ``cold'' regions. When the requests in ``hot" regions are fully satisfied, the system utility would increase at a slower speed, which is just shown in Fig. \ref{fig:cwm}. Furthermore, we would find that the system utility is low on the whole and there are many requests which are not satisfied. As the total length of requests is 3718 ($\lambda_c=0.1$), the satisfaction ratio of requests is only 8.18 percent when 200 users are thrown into this area even under random waypoint model.

We also evaluate the influence of movement control. Here, we only consider single employed user case and the results are shown in Fig. \ref{fig:cost_change}. On the whole, the system utility increases with the increase of available cost. This relation can be approximately treated as an increasing concave function. This characteristic would have a valuable meaning: The system utility would be improved greatly even with less available cost. Moreover, there exist threshold values in these curves where the user is fully used. When available cost exceeds the threshold value, the system utility would not increase. Therefore, in practice, we should rationally regulate the available cost. In addition, we observe that the satisfaction ratio would be almost 50 percent with sufficient available cost even with only one employed user. Almost 4 percent of the cost at maximum utility point would achieve nearly 10 percent satisfaction ratio. These results indicate the advantage of movement control.

\subsection{Comparison with Benchmark Algorithms}



\subsubsection{Effect of the number of crowdsourcing users $J$}
In Fig. \ref{fig:utility_n}, with the increase of crowdsourcing users' number, all these algorithms can nearly achieve a better system utility. This conforms to our intuition: The increase of crowdsourcing users means that one crowdsourcing user can actually serve requests within a smaller range (less transfer cost). Therefore, the utility-cost ratio can be improved. Namely, the system utility would increase when the total available cost is fixed. Furthermore, the system utility increment decreases with the increase of crowdsourcing users. This may be because the requests seem to be more fully utilized with the increase of crowdsourcing users, which means that in reality we don't need to employ too many users in a limited area.

Fig. \ref{fig:utility_n} also indicates the advantage of clustering on the level of requests. Although clustering on the level of requests needs a little more time consumption (shown in Fig. \ref{fig:time_n}), it can better balance the requests' overlapping degree and transfer cost while the region division only focuses on the transfer cost and neglects the requests' overlapping degree. We also find that although the greedy cost allocation method can not ensure the optimality of cost allocation, the gap is relatively small compared with other algorithms due to the effect of approximately increasing concave characteristic in Fig. \ref{fig:cost_change}. Considering the low time complexity, the greedy cost allocation method seems to be more suitable for the large available cost case. Fig. \ref{fig:time_n} further illustrates the high time complexity of request clustering and branch-and-bound more visually.

We show another evaluation metric, i.e., cost allocation fairness. Cost allocation fairness is measured by mean absolute deviation (MAD) value of different employed users' allocated cost. A smaller MAD value means a better fairness. This metric can sometimes reflect the load on different users such as transfer distance or even profit. Just as shown in Fig. 9(c), all these methods have a moderate cost allocation fairness. Actually, GPA algorithm adopts normalized cut spectral clustering to partition the requests and the normalized cut spectral clustering can achieve a relatively balanced partition.

\subsubsection{Effect of total available cost $C$}
In Fig. \ref{fig:utility_cost}, all of methods achieve better system utility with the increase of total available cost, as bigger $C$ stands for more transfer opportunities. Here, Greedy has a worst performance, mainly because it does not consider the cost-effectiveness. Sequential Scheduling can have relatively better system utility due to the approximately increasing concave relation between optimal utility and allocated cost. The curves also shows the advantage of partition and cost allocation in GPA with branch-and-bound/greedy cost allocation. In addition, the approximate increasing concave relation between system utility and cost still holds in the case of multiple crowdsourcing users.

\subsection{Further Discussion}

\begin{table}[t]
	\centering
	\begin{threeparttable}[b]
		\caption{Comparison with Existing Algorithms.}
		\renewcommand\arraystretch{1.15}
		\begin{tabular}{p{2.1cm}<{\centering}p{1.8cm}<{\centering}p{1.1cm}<{\centering}p{2.1cm}<{\centering}}
			\toprule
			    & System Utility  &  Time  & Fairness (MAD) \\
			\midrule
			SOF \cite{li2014message}          & 101          & \textbf{129} & \textbf{98} \\
			GBB \cite{li2014message}          & 112          & 622  & 155 \\
			User-Centric \cite{chen2017cache} & 506          & 1850 & 127 \\
			MicroCast \cite{le2015microcast}  & 542          & 3410 & 130 \\
			\textbf{GPA+B\&B}                 & \textbf{628} & 3910 & 136 \\
			\bottomrule
		\end{tabular}
		\label{table:compared_algs}
	\end{threeparttable}
\end{table}

Based on the comparison with benchmark algorithms, it is intuitive to find that GPA with Branch-and-Bound (abbrv. GPA+B\&B) has the optimal system utility. Here, we compare GPA+B\&B with the existing approaches, including SOF \cite{li2014message}, GBB \cite{li2014message}, User-Centric \cite{chen2017cache}, and MicroCast \cite{le2015microcast}. In order to evaluate the scheduling strategies in the same condition, we modify these algorithms and set the key parameters (e.g., $ N=5 $).  Table~\ref{table:compared_algs} shows the concrete results of five algorithms from three aspects. It indicates that our algorithm, i.e., GPA+B\&B, has the best system utility and needs the more time due to the relative complex operations. Although SOF has the least time and the best fairness, its system utility is worst due to selecting an available neighbor at random. Therefore, in the future, we plan to improve our algorithms to reduce the computational complexity.

\section{Conclusion}
\label{sec:conclude}

Mobile devices speed up the development of new video applications and put forward higher demands on current mobile access network. In this paper, we adopt the crowdsourcing paradigm to offer incentive for guiding the movement of recruited crowdsourcing users and facilitate the optimization of the movement decision for high-quality video enhancement. We analyze the influence of crowdsourcing style on the improvement of D2D communication. We formulate the movement control decision as a cost-constrained user recruitment optimization problem. We study in detail how to optimize movement control decision for single and multiple users. The effectiveness of our algorithms is evaluated by simulations. The results demonstrate the crowdsourcing style can guarantee a higher D2D communication efficiency for video enhancement.

\section*{Acknowledges}
Our thanks to the reviewers for their constructive comments and suggestions to improve the quality of the manuscript. This work was supported in part by the National Key Research and Development Project under Grant No. 2020YFB1707601, the Natural Science Research of Jiangsu Higher Education Institutions of China under Grant No. 21KJB520036, Jiangsu Provincial Program for Innovation \& Entrepreneurship under Grant No. JSSCBS20210330, the Open Topic of State Key Lab. for Novel Software Technology, Nanjing University, P.R. China under Grant No. KFKT2021B35. Dr Xuyun Zhang is the recipient of an ARC DECRA (project No. DE210101458) funded by the Australian Government. 

\bibliographystyle{ACM-Reference-Format}
\balance
\bibliography{sample-base}

\end{document}